\newtheorem{assumption}{Assumption}
\newtheorem{remark}{Remark}
\newtheorem{theorem}{Theorem}
\newtheorem{corollary}{Corollary}
\newtheorem{lemma}{Lemma}
\newtheorem{proposition}{Proposition}
\DeclareMathOperator{\Var}{\text{Var}}
\DeclareMathOperator{\Cov}{\text{Cov}}
\title{Causal inference in social platforms under approximate interference networks}
\author{{\hspace{1mm}Yiming Jiang} \\
	Industrial and System Engineering\\
	Georgia Institute of Technology\\
	Atlanta, GA 30318 \\
	\texttt{yjiang463@gatech.edu} \\
	\And
	{\hspace{1mm}Lu Deng} \\
	Tencent, Inc\\
         Shenzhen, Guangdong, China\\
	\texttt{adamdeng@tencent.com} \\
 	\And
	{\hspace{1mm}Yong Wang} \\
	Tencent, Inc\\
        Shenzhen, Guangdong, China\\
	\texttt{darwinwang@tencent.com} \\
  	\And
	{\hspace{1mm}He Wang} \\
		Industrial and System Engineering\\
	Georgia Institute of Technology\\
	Atlanta, GA 30318 \\
	\texttt{he.wang@isye.gatech.edu} \\
}
\begin{document}
\maketitle

\begin{abstract}
Estimating the total treatment effect (TTE) of a new feature in social platforms is crucial for understanding its impact on user behavior. However, the presence of network interference, which arises from user interactions, often complicates this estimation process. Experimenters typically face challenges in fully capturing the intricate structure of this interference, leading to less reliable estimates. To address this issue, we propose a novel approach that leverages surrogate networks and the pseudo inverse estimator. Our contributions can be summarized as follows: (1) We introduce the surrogate network framework, which simulates the practical situation where experimenters build an approximation of the true interference network using observable data. (2) We investigate the performance of the pseudo inverse estimator within this framework, revealing a bias-variance trade-off introduced by the surrogate network. We demonstrate a tighter asymptotic variance bound compared to previous studies and propose an enhanced variance estimator outperforming the original estimator. (3) We apply the pseudo inverse estimator to a real experiment involving over 50 million users, demonstrating its effectiveness in detecting network interference when combined with the difference-in-means estimator. Our research aims to bridge the gap between theoretical literature and practical implementation, providing a solution for estimating TTE in the presence of network interference and unknown interference structures.
\end{abstract}

\keywords{Causal inference,  Network interference, Total treatment effect, SUTVA}

\section{Introduction}

A/B testing, or randomized experiments, are essential tools for evaluating the impact of new product features in online platforms \citep{saveski2017detecting,saint2019-ego-cluster,chen2024-optimized-covariance,deng2024-pair-data-network}. The primary objective of A/B testing is to estimate the total treatment effect (TTE), which quantifies the difference between a scenario where all experimental units receive the current treatment and a counterfactual scenario where they all receive a new treatment. Classical A/B testing relies on the stable unit treatment value assumption (SUTVA) \citep{rubin1990-SUTVA}, which assumes that the treatment assigned to one unit does not affect any other units. However, this assumption may not hold in many situations, particularly when network interference is present \citep{hudgens2008-partial-interference,aronow2017-exposure-mapping}. For instance, when a new feature is tested on a subset of users in WeChat, the largest social platform in China, its effects can potentially spread to other users through information and content sharing. Ignoring network interference can lead to misleading experimental results and undermine data-driven decision-making.

Numerous methods have been proposed to improve TTE estimation in the presence of network interference. For example, partitioning the network into clusters and randomizing treatment at the cluster level has been shown to reduce bias \citep{eckles2017-cluster-reduce-bias,holtz2024reducing-bias-cluster}. In the post-experiment phase, estimators such as regression-adjustment \citep{chin2019regression-adjustment,han2023regression-adjustment}, Horvitz-Thompson \citep{aronow2017-exposure-mapping}, and pseudo inverse estimators \citep{cortez2023-low-order-interaction,eichhorn2024-pseudo-inverse} have been developed to adjust for network interference. However, most of these methods assume that the network structure is known \textit{a priori} and limit interference to the 1-hop neighborhood. Additionally, assumptions made about potential outcome functions, such as linearity, low-order polynomial, or exposure mapping, are often not realistic in industrial applications. For example, in WeChat, experimenters may not know which units interfere with a specific unit due to evolving social relationships and interactions through common friends. Moreover, verifying these assumptions in the pre-experiment phase is challenging, increasing the risk of unreliable results. Therefore, it is crucial to bridge the gap between theoretical literature and practical implementation.

In this work, we focus on the pseudo inverse estimator, a method that has not been widely adopted in industry but exhibits promising theoretical properties. This estimator is applicable to both cluster-based and Bernoulli randomization designs and has been shown to have lower variance compared to the Horvitz-Thompson estimator \citep{eichhorn2024-pseudo-inverse}. We aim to investigate its performance under a broader and more practical-oriented setting. Our contributions are threefold: (1) We introduce the surrogate network framework, which models the practical scenario where experimenters construct an approximation of the true interference network using observable data. (2) We analyze the performance of the pseudo inverse estimator within this framework, demonstrating a tighter asymptotic variance bound compared to previous work, and propose an improved variance estimator that outperforms the original one. (3) We apply the pseudo inverse estimator to a real experiment with over 50 million users, showing that combining it with the difference-in-means estimator can effectively detect network interference.

The paper is structured as follows: In Section 2, we review related work. Section 3 presents our theoretical framework. In Section 4, we analyze the bias and variance of the estimator used. Section 5 discusses variance estimation and statistical inference results. We verify our theoretical results through a comprehensive simulation study in Section 6 and present an empirical study in a real experiment in WeChat in Section 7. Finally, we conclude in Section 8.

\section{Related works}

There are various types of interference effects that violate SUTVA, including carryover \citep{bojinov2023switchback}, spatial \citep{leung2022-spatial-cluster}, and network effects \citep{ugander2013-GCR}, among others. For a comprehensive review of interference, we refer readers to \cite{halloran2016interference-book}. While our work focuses on interference under a general network, there are also studies on bipartite networks \citep{brennan2022bipartite-linear,harshaw2023bipartite-linear} and random networks \citep{li2022random-network}, among others. Unlike most literature that assumes the interference network is known \textit{a priori}, we study the case when experimenters can only observe a surrogate network, which approximates the true network. A similar setting was studied by \cite{li2021network-noise}, who used method-of-moments estimators under the assumption that the observed network is generated from the true network through a random process. Another work on causal inference under network uncertainty is \cite{bhattacharya2020structure-learning}, which applied a structure learning approach. We also mention the analysis of misspecified exposure mapping \citep{savje2024misspecified-exposure}, which can be extended to the analysis of Horvitz-Thompson estimator under our setting.

In the pre-experiment phase, several experiment design approaches have been proposed to mitigate network interference, such as cluster-based randomization \citep{ugander2013-GCR}. Empirical evidence shows that cluster-based design can reduce bias when interference exists \citep{holtz2024reducing-bias-cluster}. It has been shown that there is a bias-variance trade-off in the design of clusters \citep{viviano2023causal-clustering}. Larger clusters usually mean smaller bias and larger variance, motivating the design of clustering algorithms for causal inference \citep{ugander2013-GCR,ugander2023-RGCR,viviano2023causal-clustering}. In addition to cluster-based design, combining cluster-based and Bernoulli randomization can also be used to tackle interference \citep{jiang2023mixed-design}. When a series of experiments is possible, staggered roll-out design is another option under network interference \citep{cortez2022staggered-low-order}.

Different estimators have been shown to have varying performance under different assumptions. \cite{chin2019regression-adjustment} demonstrates that the OLS estimator is consistent for TTE estimation given a homogeneous linear data generation process. In a network with $n$ nodes and maximum degree $d$, \cite{jiang2023mixed-design} proposed an estimator under heterogeneous linear potential outcome functions with an MSE of $O(d^3/(np))$, where $p\le 0.5$ is the marginal treatment probability of units. \cite{cortez2023-low-order-interaction} showed that the MSE of the pseudo inverse estimator is $O(d^{\beta+2}/(np^{\beta}))$, given polynomial potential outcome functions with maximum degree $\beta$. \cite{ugander2013-GCR} presented a $O(d^4/(np^d))$ bound on the MSE of the Horvitz-Thompson estimator under cluster-based design, which was later improved to $O(d^6/(np^d))$ in \cite{ugander2023-RGCR}. Our result can be used to show a $O(d^2/(np))$ bound under linear potential outcome functions, which, to the best of our knowledge, is the tightest bound under this setting.

Beyond estimating TTE, other research goals have attracted attention in the literature on network interference, such as estimating average direct effect \citep{savje2021average-direct-effect}, minimizing the worst-case variance of cluster-based design \citep{candogan2024correlated-cluster}, and testing for the existence of network interference \citep{saveski2017detecting,athey2018exact-p-value,han2023detecting-interference}. We have also proposed an approach for testing SUTVA without requiring specific experimental design or Monte-Carlo simulation.

\section{Setup}
The population consists of $n$ units. We denote the treatment assignment vector as $\vec z \in \{0,1\}^n$, where $z_i = 1$ indicates unit $i$ is assigned to the treatment group, and $z_i = 0$ if $i$ is assigned to the control group. Let $Y_i(\vec z)$ represent the potential outcome of unit $i$ under treatment assignment $\vec z$. A key estimand of interest is the Total Treatment Effect (TTE), defined as the difference in average outcomes when all or no units receive treatment:
\begin{align}
    \text{TTE} = \frac{1}{n} \sum_{i=1}^n \left[ Y_i(\vec 1) - Y_i(\vec 0) \right] \label{eq:TTE definition}
\end{align}

Identifying TTE is not feasible without restrictions on how $Y_i(\vec z)$ can vary with $\vec z$. The prevailing approach in the literature assumes that interference is represented by a dependency network $\mathcal{A}$. We consider $\mathcal{A}$ to be undirected and represent it as a binary symmetric matrix, where $A_{ij} = 1$ indicates that $Y_i$ is affected by the treatment assignment of unit $j$. By convention, $A_{ii} = 1$ for all $i = 1, 2, ..., n$. Let $\mathcal{N}_i = \{j: A_{ij} = 1\}$ denote the set of neighbors of unit $i$. $Y_i(\vec z)$ is solely a function of treatments within $\mathcal{N}_i$. It is important to note that we do not impose restrictions on the size of $|\mathcal{N}_i|$. We allow the neighborhood size $|\mathcal{N}_i|$ to be arbitrarily large, making this assumption versatile in practical applications. Furthermore, we maintain the standard assumption that potential outcomes are uniformly bounded:
\begin{assumption}[Bounded outcomes]
$|Y_i(\vec z)| \leq \bar{Y} < \infty, \quad \forall i = 1, 2, ..., n, \quad \vec{z} \in \{0,1\}^n.$\label{ass: bounded outcomes}
\end{assumption}

\subsection{Surrogate Network.}
Numerous studies adopting neighborhood interference assumptions implicitly rely on the assumption that the network $\mathcal{A}$ is known \textit{a priori}. However, this assumption is quite strong and often not feasible in many practical scenarios. In contrast, we argue that experimenters can only access a surrogate network $\mathcal{G}$, which may differ from the actual network $\mathcal{A}$. Similar to $\mathcal{A}$, $\mathcal{G}$ is represented by a binary symmetric matrix, and $G_{ij}$ is interpreted in the same manner. Let $\mathcal{M}_i = \{j: G_{ij} = 1\}$ denote the surrogate neighbor set of unit $i$.

Taking WeChat as an example, which runs hundreds of experiments involving network interference daily, the original social network comprises over one billion users and hundreds of billions of connections, leading to a substantial computational load. To reduce time and expenses, experimenters usually retain only edges that represent specific social interactions within the past 28 days, resulting in a sparser network. Such a sparse network, constructed from the underlying social relationships, can be considered a surrogate network according to our framework.

\subsection{Potential Outcomes}

To facilitate tractable inference of causal estimands, we consider the following type of potential outcome functions:

\begin{assumption}[Potential Outcome Function]
Let the potential outcome functions be denoted as $Y_i(\vec{z}) = f_i(\vec{z})$. $C_0$ is a universal constant that is independent of $\mathcal{A}$ and $\mathcal{G}$. Furthermore, let $\boldsymbol{W}=\{w_{ij}\}^{n\times n}$ be an unknown non-negative matrix such that $w_{ij}=0$ if $j\notin \mathcal{N}_i\;\forall i$, and both $||\boldsymbol{W}||_1\le C_0$ and $||\boldsymbol{W}||_{\infty}\le C_0$ hold. For all $i=1,2,...,n$, $f_i$ is unknown. Define $\psi_i^k(\vec z_{-\{k\}})=f_i(\vec z_{-\{k\}},z_k=1)-f_i(\vec z_{-\{k\}},z_k=0)$; then, we have $|\psi_i^k(\vec z_{-\{k\}})|\le C_0 w_{ik}$ and $|\psi_i^k(\vec z_{-\{k,l\}}, z_l=0)-\psi_i^k(\vec z_{-\{k,l\}}, z_l=1)|\le C_0 w_{ik}w_{il}$ for all $k\ne l$.
\label{ass: potential outcomes}
\end{assumption}

Here, $\psi_i^k(\vec z_{-\{k\}})$ represents the change in the potential outcome of unit $i$ when $z_k$ is switched from $0$ to $1$, given the treatment assignments $\vec z_{-k}$ to the remaining units. The matrix $\boldsymbol{W}$ has finite $1$ and $\infty$ norms. The constraint $||\boldsymbol{W}||_1\le C_0$ bounds the total variation of potential outcomes, ensuring that $TV(f_i)\le \sum_{k\in\mathcal{N}_i}\max_{\vec z_{-k}}|\psi_i^k(\vec z_{-\{k\}})|\le C_0\sum_{k\in\mathcal{N}_i} w_{ik}=O(1)$ for all $i$. Similarly, $||\boldsymbol{W}||_{\infty}\le C_0$ prevents any unit from exerting excessive "influence", ensuring that changes in $z_i$ result in only a finite total change in potential outcomes. We also impose an upper bound on the total variation of $\psi_i^k$ by $O(w_{ik})$. We aim to ensure that the potential outcomes remain sufficiently "smooth" regardless of changes in the dependency network $\mathcal{A}$.

To gain insight into the intuition behind this assumption, we examine the relationship between altering the recommendation algorithm and the daily time spent watching videos on the Wechat Channel. The modification of the recommendation algorithm directly impacts user behavior, while interference also arises from video sharing among users through social networks. The time a user spends watching Wechat Channel videos is related to their exposure, which is the frequency of encountering videos from either system recommendations or friends' shares. In this context, let $\vec{e}$ represent the exposure vector for all units, $\vec{z}$ indicate whether to change the recommendation for each user, and $\boldsymbol{\Delta}$ be a diagonal matrix where the $i$'th diagonal entry denotes the direct impact of the treatment. Under a well-established social interaction model:
\begin{align}
    \vec{e} = \boldsymbol{\Delta}\vec{z} + \boldsymbol{P}\vec{e}+\vec{\alpha}, \label{eq: linear exposure}
\end{align}
where $\boldsymbol{P}$ is the sharing probabilities matrix, an $n \times n$ stochastic matrix with diagonal entries set to $0$, and $\vec\alpha$ is the status quo. Under certain mild conditions, such as $||\boldsymbol{P}||_1<0.9$ and $||\boldsymbol{P}||_{\infty}<0.9$, $\vec e$ is linear in $\vec z$, that is, $\vec e=(\boldsymbol{I}-\boldsymbol{P})^{-1}(\boldsymbol{\Delta}\vec{z}+\vec\alpha)$, which can also be expressed as
\begin{align*}
    e_i=\alpha_i'+\sum_{j=1}^n w_{ij}z_j=\alpha_i'+\sum_{j\in \mathcal{N}_i}w_{ij}z_j,\;\forall i=1,2,...,n
\end{align*}
for a weight matrix $\boldsymbol{W}=(\boldsymbol{I}-\boldsymbol{P})^{-1}\boldsymbol{\Delta}=\{w_{ij}\}_{n\times n}$. It is easy to verify that $\boldsymbol{W}$ has bounded $1$ and $\infty$ norms. $\mathcal{N}_i$ is the set of $j$ for which $w_{ij}$ is nonzero. We assume that the time spent watching Wechat Channel videos is a function of exposure, therefore
\begin{equation}
    Y_i(\vec z)=Y_i(e_i)=f_i\left(\sum_{j\in \mathcal{N}_i}w_{ij}z_j\right) \label{eq: example model}
\end{equation}
for an unknown function $f_i:\mathbbm{R}\rightarrow\mathbbm{R}$. To relate this example to Assumption \ref{ass: potential outcomes}, we let $f_i$ be a differentiable function with an $L$-Lipschitz continuous and bounded first-order derivative. Consequently, we have
\[|\psi_i^k(\vec z_{-\{k\}})|= \left|\int_{0}^{w_{ik}}f'\left(\sum_{j\in \mathcal{N}_i\backslash\{k\}}w_{ij}z_j+y\right)dy \right| =O(w_{ik})\]
And
\begin{align*}
    &|\psi_i^k(\vec z_{-\{k,l\}}, z_l=1)-\psi_i^k(\vec z_{-\{k,l\}}, z_l=0)|\\
    \le &\int_{0}^{w_{ik}}\left|f'\left(\sum_{j\in \mathcal{N}_i\backslash\{k,l\}}w_{ij}z_j+w_{il}+y\right)dy -f'\left(\sum_{j\in \mathcal{N}_i\backslash\{k,l\}}w_{ij}z_j+y\right)\right|dy\\
    =&O(w_{ik}w_{il}),
\end{align*}
which satisfies Assumption \ref{ass: potential outcomes}. Although this example oversimplifies the real-world data generation process, it demonstrates that our assumptions can accommodate complex interference patterns.

\subsection{Design of Experiment.} Throughout this paper, we concentrate on the \textit{Uniform Bernoulli Design}, wherein each unit is independently assigned to treatment with a uniform probability $p \in (0,1)$. This experimental design is prevalent in standard A/B testing, known for its simplicity and implementation ease. It is extensively utilized in WeChat, with thousands of experiments conducted daily. This approach facilitates our re-analysis of existing experiment data, enabling us to adjust for interference without the need to initiate a new experiment, which could be time-consuming and costly.

\section{Estimator}\label{section: estimator}

In this section, we investigate the pseudo inverse estimator, proposed by \citep{eichhorn2024-pseudo-inverse}, and alternatively known as the SNIPE estimator \citep{cortez2023-low-order-interaction}, within the framework of a surrogate network. For practical applications, we assign a value of one to the low-order parameter (see Remark \ref{remark: low-order parameter}).

\begin{align}
    \hat{\tau}(\mathcal{G})=\frac{1}{n} \sum_{i=1}^n Y_i\sum_{j\in \mathcal{M}_i}\left(\frac{z_j}{p}-\frac{1-z_j}{1-p}\right)\label{eq: estimator}
\end{align}

\cite{cortez2023-low-order-interaction} demonstrated that, under the assumption of linear potential outcomes, $\hat{\tau}(\mathcal{A})$ is an unbiased estimator for the TTE, with a variance of Var$(\hat{\tau}(\mathcal{G}))=O\left(d_{\mathcal{A}}^3/ np(1-p)\right)$, where $d_{\mathcal{A}}$ represents the maximum degree of the underlying true network $\mathcal{A}$. In our context, due to the experimenter's inability to fully observe the true network $\mathcal{A}$, the original estimator is constrained to the surrogate network $\mathcal{G}$. Moreover, Assumption \ref{ass: potential outcomes} does not require the potential outcome to have a linear or polynomial form, which make theoretical reanalysis necessary. In this section, we will show new theoretical properties of the estimator under our refined assumptions, which offers new insights into industrial implementation. We first analyze the bias under the refined assumption, then we derive an asymptotic variance upper bound that relies on the maximum degree of $\mathcal{G}$, yielding a tighter bound than the one proposed in the original paper.

\begin{remark}[Low-order Parameter]
    \cite{cortez2023-low-order-interaction} established that, when the potential outcome function is of degree at most $\beta$, employing a SNIPE (pseudo inverse) estimator with a low-order parameter $\beta$ results in an MSE that can be upper-bounded by $O\left(\frac{d^{\beta+2}}{np^{\beta}(1-p)^{\beta}}\right)$, where $d$ is the maximum degree of the network. The reason for this article to focus on $\beta=1$ is twofold: (1) The worst-case variance may grow exponentially with $\beta$, leading to a loss of statistical power for the estimator. For instance, the variance when $\beta=2$ can be hundreds of times greater than when $\beta=1$. (2) The computational complexity associated with the SNIPE estimator is $O(n d^{\beta})$ for small $\beta$, which can render the estimation process time-consuming and even impractical in the context of large social networks.
    \label{remark: low-order parameter}
\end{remark}

Under our new assumptions about the surrogate network and potential outcomes, the pseudo inverse estimator does not necessarily provide an unbiased estimation. To see the reason behind, we first check the expected value:
\begin{lemma}\label{lemma: expectation of estimator}Under Assumption \ref{ass: potential outcomes}, the expected value of the proposed estimator is
\[E(\hat{\tau}(\mathcal{G}))=\frac{1}{n}\sum_{i=1}^n\sum_{k\in\mathcal{M}_i}E(\psi_i^k(\vec z_{-\{k\}}))\]
\end{lemma}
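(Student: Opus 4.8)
The plan is to compute the expectation directly, using linearity and then isolating the randomness in one coordinate at a time. Writing $Y_i = f_i(\vec z)$ and applying linearity of expectation, I would first split
\[
E(\hat\tau(\mathcal G)) = \frac1n \sum_{i=1}^n \sum_{k\in\mathcal M_i} E\!\left[ f_i(\vec z)\left(\frac{z_k}{p}-\frac{1-z_k}{1-p}\right)\right].
\]
The whole argument then reduces to evaluating a single generic summand $E[\, f_i(\vec z)(z_k/p - (1-z_k)/(1-p))\,]$ for fixed $i$ and $k\in\mathcal M_i$. Assumption~\ref{ass: bounded outcomes} guarantees $|f_i|\le \bar Y$, so every term is integrable and the interchange of expectation with the finite sums is legitimate.

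The key step is to condition on $\vec z_{-\{k\}}$, the assignments of all units other than $k$. Under the Uniform Bernoulli Design the coordinates of $\vec z$ are mutually independent, so conditional on $\vec z_{-\{k\}}$ the variable $z_k$ remains Bernoulli$(p)$ and independent of the conditioning. Evaluating the weight at its two possible values, where $z_k=1$ contributes $1/p$ and $z_k=0$ contributes $-1/(1-p)$, I would compute
\[
E\!\left[ f_i(\vec z)\Big(\tfrac{z_k}{p}-\tfrac{1-z_k}{1-p}\Big)\,\Big|\, \vec z_{-\{k\}}\right]
= p\cdot f_i(\vec z_{-\{k\}},z_k=1)\cdot\tfrac1p + (1-p)\cdot f_i(\vec z_{-\{k\}},z_k=0)\cdot\Big(-\tfrac1{1-p}\Big).
\]
The assignment probabilities cancel the inverse-probability weights exactly, leaving $f_i(\vec z_{-\{k\}},z_k=1)-f_i(\vec z_{-\{k\}},z_k=0)$, which is precisely $\psi_i^k(\vec z_{-\{k\}})$ by its definition in Assumption~\ref{ass: potential outcomes}.

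Finally, applying the tower property and averaging over $\vec z_{-\{k\}}$ yields $E[\, f_i(\vec z)(z_k/p-(1-z_k)/(1-p))\,] = E[\psi_i^k(\vec z_{-\{k\}})]$, and summing over $i$ and $k\in\mathcal M_i$ gives the claimed identity. I do not anticipate a genuine obstacle: the computation is a Horvitz--Thompson-style inverse-probability cancellation. The only point requiring care is the conditioning/independence step, namely recognizing that the weight $z_k/p-(1-z_k)/(1-p)$ acts as an unbiased discrete-difference operator for the $k$-th coordinate under Bernoulli assignment, which is exactly what converts the product $f_i(\vec z)\cdot(\text{weight})$ into the marginal difference $\psi_i^k$. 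It is worth emphasizing that this derivation nowhere uses $\mathcal G=\mathcal A$: the surrogate neighbor set $\mathcal M_i$ merely determines which coordinates $k$ enter the inner sum, and the fact that $\mathcal M_i$ need not coincide with $\mathcal N_i$ is the source of the bias to be analyzed subsequently.
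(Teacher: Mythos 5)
Your proof is correct and is essentially the same argument as the paper's: both hinge on the independence of the Bernoulli assignments and the exact cancellation of the inverse-probability weights, turning $E[Y_i(z_k/p-(1-z_k)/(1-p))]$ into $E(\psi_i^k(\vec z_{-\{k\}}))$. The only cosmetic difference is that you condition on $\vec z_{-\{k\}}$ and apply the tower property, whereas the paper conditions on $z_k$ via the law of total expectation; the computation is identical in substance.
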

\begin{proof}
    See Appendix \ref{appendix: proof of expectation}.
\end{proof}
Here $E(\psi_i^k(\vec z_{-\{k\}}))$ is the expected marginal increment of $i$'s potential outcome given that $z_k$ is switched from $0$ to $1$, which can be viewed as a linear approximation to the interference from $k$ to $i$. Additionally, due to the missed edges in the $\mathcal{G}$, the interference outside the surrogate neighborhood $\mathcal{M}_i$ is ignored. This results in two types of bias, one from the linear approximation and the other from the surrogate network $\mathcal{G}$. We will discuss the two types of bias, called endogenous and exogenous bias, in the following subsections. The next corollary provides a sufficient condition under which the two types of bias does not exist, which is exact the same assumption as in \cite{cortez2023-low-order-interaction}.
\begin{corollary}\label{corollary: unbiased estimator}
$\hat{\tau}(\mathcal{G})$ is unbiased if $\mathcal{A}=\mathcal{G}$ and the potential outcomes are linear in $\vec z$.
\end{corollary}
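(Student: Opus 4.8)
The plan is to invoke Lemma~\ref{lemma: expectation of estimator} and then verify that, under the two stated hypotheses, the resulting expression collapses exactly to the definition of TTE in \eqref{eq:TTE definition}. First I would use $\mathcal{A}=\mathcal{G}$, which forces $\mathcal{M}_i=\mathcal{N}_i$ for every $i$, so that the inner summation in Lemma~\ref{lemma: expectation of estimator} ranges over the true interference neighborhood $\mathcal{N}_i$ rather than a surrogate one.

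Next I would make the linearity hypothesis explicit by writing $Y_i(\vec z)=a_i+\sum_{k\in\mathcal{N}_i}b_{ik}z_k$ for scalars $a_i$ and $b_{ik}$, with $b_{ik}=0$ whenever $k\notin\mathcal{N}_i$ by the neighborhood restriction. The key observation is that under this affine form the finite difference $\psi_i^k(\vec z_{-\{k\}})=f_i(\vec z_{-\{k\}},z_k=1)-f_i(\vec z_{-\{k\}},z_k=0)=b_{ik}$ is a constant, independent of the treatments assigned to the remaining units. Consequently its expectation is trivial, $E(\psi_i^k(\vec z_{-\{k\}}))=b_{ik}$, and the randomness of the Bernoulli design plays no role.

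Substituting into Lemma~\ref{lemma: expectation of estimator} then gives $E(\hat{\tau}(\mathcal{G}))=\frac{1}{n}\sum_{i=1}^n\sum_{k\in\mathcal{N}_i}b_{ik}$. To close the argument I would compute the estimand directly from the linear form: since $Y_i(\vec 1)-Y_i(\vec 0)=\sum_{k\in\mathcal{N}_i}b_{ik}$, we obtain $\text{TTE}=\frac{1}{n}\sum_{i=1}^n\sum_{k\in\mathcal{N}_i}b_{ik}$ as well. The two expressions coincide termwise, which yields $E(\hat{\tau}(\mathcal{G}))=\text{TTE}$ and hence unbiasedness.

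I do not expect a genuine obstacle here: the statement is essentially a consistency check confirming that the two sources of error identified after Lemma~\ref{lemma: expectation of estimator}---the endogenous bias from the linear approximation and the exogenous bias from the surrogate network---both vanish under the stated hypotheses. The only point requiring a little care is the correct reading of ``linear in $\vec z$'' as an affine function whose support respects the neighborhood $\mathcal{N}_i$; once that is pinned down, the cancellation is immediate, because a linear finite difference is exactly the coefficient that also builds up the all-treated-versus-all-control contrast defining TTE.
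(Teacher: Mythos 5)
Your proof is correct and follows essentially the same route as the paper: the corollary is an immediate consequence of Lemma~\ref{lemma: expectation of estimator}, since linearity makes each $\psi_i^k(\vec z_{-\{k\}})$ a constant coefficient and $\mathcal{A}=\mathcal{G}$ makes $\mathcal{M}_i=\mathcal{N}_i$, so the expectation collapses to $\frac{1}{n}\sum_{i=1}^n\sum_{k\in\mathcal{N}_i}w_{ik}=\text{TTE}$. This is exactly the computation the paper carries out (in slightly more general form, with $\mathcal{M}_i$ possibly missing edges) in the proof of Lemma~\ref{lemma: exogenous bias} in Appendix~\ref{appendix: proof of lemma exogenous bias}.
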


\subsection{Bias} The exogenous bias is from the mismatch between the ground truth network $\mathcal{A}$ and the surrogate network $\mathcal{G}$. The missing edges in $\mathcal{G}$ can make the estimator underestimate the interference, thereby causing bias. To give a quantitative explanation, consider a popular linear model:
\begin{equation}
    Y_i(\vec z)= Y_i(\vec 0)+\sum_{j\in \mathcal{N}_i}w_{ij} z_j\;\forall i,\label{eq: linear potential outcome}
\end{equation}
where $\boldsymbol{W}$ is defined in Assumption \ref{ass: potential outcomes}. The following assumption is used to quantify the difference between $\mathcal{A}$ and $\mathcal{G}$.

\begin{assumption} [Gap to the ground truth] \label{ass: gap between A, G}
There exist $\delta\in [0,1]$ such that 
    \begin{align*}
        \sum_{j=1}^n w_{ij}A_{ij}(1-G_{ij})\le\delta\sum_{j=1}^n w_{ij},\;\forall i,
    \end{align*}
\end{assumption}
where $A_{ij}(1-G_{ij})$ means edge $(i,j)$ is in the true but not in the surrogate network. The relative bias in this scenario is related to $\delta$, which is explained as the relative weighted difference between $\mathcal{A}$ and $\mathcal{G}$.
\begin{lemma}\label{lemma: exogenous bias}
Under the potential outcomes (\ref{eq: linear potential outcome}) and Assumption \ref{ass: gap between A, G}, the relative bias of the estimator (\ref{eq: estimator}) is $O(\delta)$, i.e.
    \[\frac{|E(\hat{\tau}(\mathcal{G}))-\text{TTE}|}{|TTE|}= O(\delta)\]
\end{lemma}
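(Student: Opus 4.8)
The plan is to chain the general expectation identity of Lemma~\ref{lemma: expectation of estimator} with the explicit form that the marginal increments $\psi_i^k$ take under the linear model~(\ref{eq: linear potential outcome}), and then to algebraically recast the resulting difference $E(\hat\tau(\mathcal{G}))-\text{TTE}$ into a form to which Assumption~\ref{ass: gap between A, G} applies term by term.

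First I would specialize the expectation. Under the linear potential outcomes~(\ref{eq: linear potential outcome}), flipping $z_k$ from $0$ to $1$ changes $f_i$ by exactly $w_{ik}$, independently of the remaining assignment $\vec z_{-\{k\}}$; hence $\psi_i^k(\vec z_{-\{k\}})=w_{ik}$ holds deterministically and $E(\psi_i^k)=w_{ik}$. Substituting into Lemma~\ref{lemma: expectation of estimator} gives $E(\hat\tau(\mathcal{G}))=\frac1n\sum_{i=1}^n\sum_{k\in\mathcal{M}_i}w_{ik}$. Since $\mathcal{M}_i=\{k:G_{ik}=1\}$ and, by the support condition in Assumption~\ref{ass: potential outcomes}, $w_{ik}=w_{ik}A_{ik}$, the inner sum equals $\sum_{k}w_{ik}A_{ik}G_{ik}$. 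In parallel I would write $\text{TTE}$ in the same notation: the linear model yields $Y_i(\vec 1)-Y_i(\vec 0)=\sum_{j\in\mathcal{N}_i}w_{ij}=\sum_{j}w_{ij}$, so $\text{TTE}=\frac1n\sum_{i=1}^n\sum_{j}w_{ij}$, which is non-negative because $\boldsymbol{W}$ is non-negative; in particular $|\text{TTE}|=\frac1n\sum_i\sum_j w_{ij}$ (and the relative bias is meaningful whenever $\text{TTE}\neq 0$).

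Next I would form the difference and observe that, using $w_{ij}=w_{ij}A_{ij}$, only the terms with $A_{ij}=1$ but $G_{ij}=0$ survive, giving $E(\hat\tau(\mathcal{G}))-\text{TTE}=-\frac1n\sum_{i=1}^n\sum_{j}w_{ij}A_{ij}(1-G_{ij})$. Taking absolute values—every term here is non-negative—and applying Assumption~\ref{ass: gap between A, G} to each row, $\sum_j w_{ij}A_{ij}(1-G_{ij})\le\delta\sum_j w_{ij}$, then summing over $i$ yields $|E(\hat\tau(\mathcal{G}))-\text{TTE}|\le\delta\cdot\frac1n\sum_i\sum_j w_{ij}=\delta\,|\text{TTE}|$, i.e. the relative bias is $O(\delta)$.

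There is no genuine analytic obstacle here once the deterministic form $\psi_i^k=w_{ik}$ is established; the only point requiring care is the index bookkeeping in the second step, namely ensuring that the indicator $A_{ij}(1-G_{ij})$ emerges cleanly. This is a routine consequence of $w_{ij}$ being supported on $\mathcal{N}_i$, which lets $A_{ij}$ be inserted or dropped freely, so the decomposition of the bias into exactly the quantity controlled by the assumption is immediate.
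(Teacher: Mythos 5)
Your proposal is correct and follows essentially the same route as the paper's own proof: both specialize Lemma~\ref{lemma: expectation of estimator} to the linear model (where $\psi_i^k = w_{ik}$ deterministically), rewrite $E(\hat\tau(\mathcal{G}))$ as $\text{TTE}$ minus the missing-edge mass $\frac{1}{n}\sum_i\sum_k w_{ik}A_{ik}(1-G_{ik})$, and invoke Assumption~\ref{ass: gap between A, G} row by row to bound that mass by $\delta\,\text{TTE}$. Your version is slightly more explicit about sign bookkeeping (non-negativity of $\boldsymbol{W}$ and the support condition $w_{ik}=w_{ik}A_{ik}$), but the argument is the same.
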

\begin{proof}
    See Appendix \ref{appendix: proof of lemma exogenous bias}.
\end{proof}
For the case of nonlinear potential outcome, we can show that the absolute value of the exogenous bias is bounded by $O(\delta)$ under Assumption \ref{ass: potential outcomes}. The proof is trivial and omitted here. 

The endogenous bias of $\hat\tau(\mathcal{G})$ is from the non-linearity of potential outcomes. Without more information about $f_i$ except for Assumption \ref{ass: potential outcomes}, we are not able to give a quantitative explanation in terms of $\boldsymbol{W}$. But follows from \cite{cortez2023-low-order-interaction}, we can give a qualitative explanation. With some abuse of notation, we equivalently present $f_i$ as $f_i(S)\equiv f_i(\vec z=\{\mathbbm{1}\{i\in S\}\}_{i=1}^n)$, in which the input is changed from a vector to a subset $S$ of $\{1,...,n\}$. Then we can rewrite $f_i$ as a polynomial function:
\begin{align*}
    f_i(\vec z)=\sum_{S\subseteq \mathcal{N}_i}f_i(S)\prod_{i\in S}z_i\prod_{j\in \mathcal{N}_i\backslash S}(1-z_j)=\sum_{S\subseteq \mathcal{N}_i}a_{i,S}\prod_{k\in S}z_k
\end{align*}
where $a_{i,S}=\sum_{S'\subseteq S}f_i(S')(-1)^{|S\backslash S'|}$. We call $a_{i,S}$ the joint treatment effect of $S$ for unit $i$. Define the $\beta$'th-order joint treatment effect as $\bar a_{\beta}=\frac{1}{n}\sum_{i=1}^n\sum_{S\subseteq\mathcal{N}_i:|S|=\beta} a_{i,S}$, 
then the TTE can be alternatively presented as $\text{TTE}=\sum_{\beta=0}^{d_{\mathcal{A}}} \bar a_{\beta}$.
The following Lemma shows that the endogenous bias of estimator $\hat\tau(\mathcal{G})$ is from the underestimate of high-order joint treatment effect:
\begin{lemma}\label{lemma: endogenous bias}
    When $\mathcal{A}=\mathcal{G}$, $\text{TTE}-E(\hat{\tau}(\mathcal{G}))= \sum_{\beta=0}^{d_{\mathcal{A}}}(1-\beta p^{\beta-1})\bar a_{\beta}$
\end{lemma}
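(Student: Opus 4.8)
The plan is to compute $E(\hat\tau(\mathcal G))$ in closed form through the polynomial (Möbius) expansion $f_i(\vec z)=\sum_{S\subseteq\mathcal N_i}a_{i,S}\prod_{k\in S}z_k$ already introduced, and then subtract it from the expansion of the TTE. Since $\mathcal A=\mathcal G$ we have $\mathcal M_i=\mathcal N_i$, so Lemma~\ref{lemma: expectation of estimator} reduces the task to evaluating
\[E(\hat\tau(\mathcal G))=\frac1n\sum_{i=1}^n\sum_{k\in\mathcal N_i}E\bigl(\psi_i^k(\vec z_{-\{k\}})\bigr),\]
i.e.\ to computing each $E(\psi_i^k)$ from the polynomial form and summing over $k$ and $i$.

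First I would re-express $\psi_i^k$ in terms of the joint treatment effects. Switching $z_k$ from $0$ to $1$ annihilates every monomial that does not contain $z_k$ and sets $z_k=1$ in those that do, so that $\psi_i^k(\vec z_{-\{k\}})=\sum_{S\subseteq\mathcal N_i:\,k\in S}a_{i,S}\prod_{l\in S\setminus\{k\}}z_l$. Under the Uniform Bernoulli Design the $z_l$ are i.i.d.\ $\mathrm{Bernoulli}(p)$, so the expectation of a monomial factorizes and depends only on its degree, giving $E\bigl[\prod_{l\in S\setminus\{k\}}z_l\bigr]=p^{|S|-1}$ and hence $E(\psi_i^k)=\sum_{S\ni k}a_{i,S}\,p^{|S|-1}$.

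Next I would sum over $k\in\mathcal N_i$ and interchange the order of summation. A fixed subset $S$ is counted once for every $k\in S$, i.e.\ exactly $|S|$ times, so $\sum_{k\in\mathcal N_i}E(\psi_i^k)=\sum_{S\subseteq\mathcal N_i}|S|\,a_{i,S}\,p^{|S|-1}$. Averaging over $i$ and grouping subsets by cardinality $\beta=|S|$, the definition of $\bar a_\beta$ gives $E(\hat\tau(\mathcal G))=\sum_{\beta=0}^{d_{\mathcal A}}\beta p^{\beta-1}\bar a_\beta$, where the $\beta=0$ term vanishes because of the factor $\beta$. Combining this with the stated expansion $\text{TTE}=\sum_{\beta=0}^{d_{\mathcal A}}\bar a_\beta$ and subtracting term by term yields $\text{TTE}-E(\hat\tau(\mathcal G))=\sum_{\beta=0}^{d_{\mathcal A}}(1-\beta p^{\beta-1})\bar a_\beta$, which is the claim. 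As a consistency check, the $\beta=1$ coefficient equals $1-p^{0}=0$, recovering the exactness on the linear part already reflected in Corollary~\ref{corollary: unbiased estimator}.

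The computation is essentially combinatorial bookkeeping, so I expect no deep obstacle; the one step that genuinely needs care is the double-counting identity $\sum_{k\in\mathcal N_i}\sum_{S\ni k}(\cdot)=\sum_{S}|S|(\cdot)$, together with the justification that under independent Bernoulli assignment the expectation of a monomial $\prod_{l}z_l$ depends only on the number of distinct indices. The remaining effort is keeping the index ranges straight, in particular treating the $\beta=0$ constant term (which survives on the TTE side but not in $E(\hat\tau(\mathcal G))$) and the vanishing $\beta=1$ term correctly.
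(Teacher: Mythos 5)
Your proof is correct and takes essentially the same route as the paper's: both rest on the polynomial expansion $f_i(\vec z)=\sum_{S\subseteq\mathcal N_i}a_{i,S}\prod_{k\in S}z_k$, the Bernoulli monomial expectation $p^{|S|-1}$, and the double-counting identity that each subset $S$ is hit $|S|$ times, yielding $E(\hat\tau(\mathcal G))=\sum_{\beta}\beta p^{\beta-1}\bar a_\beta$. The only organizational difference is that you invoke Lemma \ref{lemma: expectation of estimator} and expand $\psi_i^k$, whereas the paper substitutes the expansion of $f_i$ directly into the estimator and evaluates $E\left(\prod_{k\in S}z_k\left(\frac{z_j}{p}-\frac{1-z_j}{1-p}\right)\right)=\mathbbm{1}\{j\in S\}p^{|S|-1}$; the content is identical.
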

\begin{proof}
    See Appendix \ref{appendix: proof of endogenous bias}.
\end{proof}
when $p=0.5$, $\hat\tau(\mathcal{G})$ can correctly estimate the first and second-order joint treatment effect, but underestimate the third-order one by $25\%$ and the forth-order one by $50\%$, et al. Smaller $p$ will usually result in higher bias, thus we recommend to use $p=0.5$ in implementation.

We believe that the above analysis towards bias provides useful insights into practice, since in the most cases, $\mathcal{A}$ does not equal to $\mathcal{G}$, and the potential outcomes might deviate from linear. As a practical recommendation, we suggest experimenters to using historical data to verify the constructed surrogate network before the experiment, and avoid the scenario under which high-order effect may be significant.  

\subsection{Variance}
In this section, we investigate the asymptotic behavior of $\Var( \hat{\tau}(\mathcal{G}))$. We first derive the asymptotic upper bound as a function of $d_{\mathcal{G}}$, $n$ and $p$, where $d_{\mathcal{G}}$ denote the largest degree of network $\mathcal{G}$.
The following theorem summarizes the key theoretical insights of this article, which can be used to guide the choice of sparsity when we design the surrogate network.

\begin{theorem}[Variance Upper Bound]\label{theorem: variance upper bound}
Under Assumption \ref{ass: bounded outcomes}$\sim$ \ref{ass: potential outcomes}, the estimator defined in (\ref{eq: estimator}) has the following asymptotic variance upper bound:
\[\Var( \hat{\tau}(\mathcal{G}))=O\left(\frac{d_{\mathcal{G}}^2}{np(1-p)}\right)\]
\end{theorem}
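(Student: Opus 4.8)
The plan is to expand the variance as a double sum of covariances of the summands and to isolate a dominant ``diagonal'' contribution of the claimed order from a provably negligible ``off-diagonal'' remainder. Write $g_j := z_j/p - (1-z_j)/(1-p)$, so that $\hat\tau(\mathcal G) = \frac1n\sum_i Y_i\sum_{k\in\mathcal M_i} g_k$. The $g_j$ are independent across $j$ with $E[g_j]=0$ and $E[g_j^2]=1/(p(1-p))$, and the single most useful identity is that for any function $\phi$ of $z_k$ alone, $E[\phi(z_k)g_k]=\phi(1)-\phi(0)$, the discrete derivative $D_k\phi$; applied to $Y_i$ this recovers Lemma~\ref{lemma: expectation of estimator} and gives $|E[Y_ig_k]|=|E[\psi_i^k]|\le C_0 w_{ik}$. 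I would then expand
\[\Var(\hat\tau(\mathcal G)) = \frac1{n^2}\sum_{i,i'}\sum_{k\in\mathcal M_i}\sum_{k'\in\mathcal M_{i'}}\Cov(Y_i g_k,\,Y_{i'}g_{k'}),\]
and split the inner sum according to whether $k=k'$ or $k\neq k'$.

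For the diagonal part $k=k'$ a crude bound suffices: by Assumption~\ref{ass: bounded outcomes}, $|\Cov(Y_ig_k,Y_{i'}g_k)|\le |E[Y_iY_{i'}g_k^2]|+|E[Y_ig_k]||E[Y_{i'}g_k]|\le \bar Y^2/(p(1-p)) + C_0^2 w_{ik}w_{i'k}$. For each fixed $k$ the pairs $(i,i')$ with $k\in\mathcal M_i\cap\mathcal M_{i'}$ number at most $d_{\mathcal G}^2$ (the square of the surrogate degree of $k$), and there are $n$ choices of $k$, so the diagonal part contributes $\frac1{n^2}\cdot n d_{\mathcal G}^2\cdot O\!\left(1/(p(1-p))\right) = O\!\left(d_{\mathcal G}^2/(np(1-p))\right)$, which is exactly the target rate.

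The off-diagonal part $k\neq k'$ is where the work lies, and the goal is to show it is $O(d_{\mathcal G}/n)$, hence dominated. Here I would condition on $\vec z_0 := \vec z_{-\{k,k'\}}$ and, given $\vec z_0$, perform the two-variable ANOVA decomposition of $Y_ig_k$ and $Y_{i'}g_{k'}$ into their $z_k$-, $z_{k'}$-, and $\{z_k,z_{k'}\}$-interaction components, so that the conditional covariance splits into three orthogonal pieces. The first-order pieces are governed by mixed second differences: for instance the $z_k$-component of $Y_{i'}g_{k'}$ is $\psi_{i'}^{k'}(\cdot,z_k)-E_{z_k}[\psi_{i'}^{k'}]$, whose fluctuation is $|D_k\psi_{i'}^{k'}|\le C_0 w_{i'k}w_{i'k'}$ by Assumption~\ref{ass: potential outcomes}, and a Cauchy--Schwarz estimate yields a bound of order $\bar Y C_0\,w_{i'k}w_{i'k'}$ (and symmetrically $\bar Y C_0\, w_{ik}w_{ik'}$). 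Summing these uses $\sum_i \mathbbm{1}[k\in\mathcal M_i]\le d_{\mathcal G}$ once and the row-sum bound $\sum_k w_{i'k}\le\|\boldsymbol W\|_\infty\le C_0$, giving $O(n d_{\mathcal G})$.

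The genuinely delicate term is the second-order interaction covariance, which is nonzero only in the ``doubly crossed'' configuration $k'\in\mathcal N_i$ and $k\in\mathcal N_{i'}$. A naive Leibniz expansion of $D_kD_{k'}(f_if_{i'})$ leaves a cross term that, bounded crudely, scales with the \emph{true} degree $d_{\mathcal A}$ and would destroy the result; the key is that computing the interaction (Fourier) coefficient of the multilinear function $Y_ig_k$ explicitly shows its standard deviation is only $O(C_0 w_{ik'})$ (and that of $Y_{i'}g_{k'}$ is $O(C_0 w_{i'k})$), so the interaction covariance is at most $O(C_0^2\, w_{ik'}w_{i'k})$ with \emph{matched} weights. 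Summing $\sum_{i,i',k,k'}\mathbbm{1}[k\in\mathcal M_i]\,\mathbbm{1}[k'\in\mathcal M_{i'}]\,w_{ik'}w_{i'k}$ then telescopes through both norm bounds on $\boldsymbol W$: summing over $k'$ gives a row sum $\le\|\boldsymbol W\|_\infty\le C_0$, summing over $i'$ gives a column sum $\le\|\boldsymbol W\|_1\le C_0$, and what remains is $\sum_i|\mathcal M_i|\le n d_{\mathcal G}$. Thus the off-diagonal part is $O(n d_{\mathcal G})$ and, after the $1/n^2$ factor, is dominated by the diagonal term, completing the bound. I expect this last step---obtaining the matched-weight bound on the interaction covariance rather than the lossy $d_{\mathcal A}$-dependent one---to be the main obstacle, since it is precisely what converts a true-network-degree dependence into the surrogate-network-degree dependence claimed in the theorem.
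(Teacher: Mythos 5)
Your diagonal part ($k=k'$) is correct and coincides with the paper's argument (their Proposition \ref{prop: Cov(Y_iD_k,Y_jD_k) bound} plus the count $\sum_{i,i'}|\mathcal{M}_i\cap\mathcal{M}_{i'}|\le n d_{\mathcal{G}}^2$, which uses symmetry of $\mathcal{G}$ exactly as you do). The gap is in the off-diagonal part: conditioning on $\vec z_0=\vec z_{-\{k,k'\}}$ and splitting the \emph{conditional} covariance into the three orthogonal ANOVA pieces only reconstructs the first term of the law of total covariance,
\begin{align*}
\Cov(Y_ig_k,Y_{i'}g_{k'})=E\bigl[\Cov(Y_ig_k,Y_{i'}g_{k'}\mid \vec z_0)\bigr]+\Cov\bigl(E[Y_ig_k\mid\vec z_0],\,E[Y_{i'}g_{k'}\mid\vec z_0]\bigr),
\end{align*}
and your proposal never addresses the second term. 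It is not negligible: $E[Y_ig_k\mid\vec z_0]=p\,\psi_i^k(\vec z_0,z_{k'}=1)+(1-p)\,\psi_i^k(\vec z_0,z_{k'}=0)$ is a genuinely random function of the shared background assignments $\vec z_0$, fluctuating at scale $w_{ik}$ (its increment in each coordinate $z_j$ is at most $C_0w_{ik}w_{ij}$). Cauchy--Schwarz therefore only yields a bound of order $w_{ik}w_{i'k'}$ per pair, and
\begin{align*}
\frac{1}{n^2}\sum_{i,i'}\sum_{k\in\mathcal{M}_i}\sum_{k'\in\mathcal{M}_{i'}}w_{ik}w_{i'k'}\le\frac{1}{n^2}\Bigl(\sum_i\sum_{k}w_{ik}\Bigr)^2\le C_0^2=O(1),
\end{align*}
which does not vanish at all. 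So the term you omitted cannot be dispatched crudely; it is the actual crux of the theorem.

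What is needed is a bound on $\Cov(\psi_i^k,\psi_{i'}^{k'})$, as functions of the shared $\vec z_0$, in terms of the \emph{neighborhood overlap} rather than the product of standard deviations. This is precisely where the paper's key technical device enters (Step 1 of their Proposition \ref{prop: Cov(Y_iD_k,Y_jD_l) bound}, using Lemma \ref{lemma: associated r.v.}, Newman's inequality for associated random variables): since $\psi_i^k$ need not be monotone, one dominates it, $\psi_i^k\ll C_0w_{ik}\lambda_i^{kk'}$ with $\lambda_i^{kk'}=\sum_{j\in\mathcal{N}_i\setminus\{k,k'\}}w_{ij}z_j$, which is legitimate because Assumption \ref{ass: potential outcomes} bounds the increments $|D_j\psi_i^k|\le C_0w_{ik}w_{ij}$; association of the independent coordinates of $\vec z_0$ then gives $\Cov(\psi_i^k,\psi_{i'}^{k'})\le C_0^2w_{ik}w_{i'k'}\Cov(\lambda_i^{kk'},\lambda_{i'}^{kk'})=O\bigl(w_{ik}w_{i'k'}\sum_{k''}w_{ik''}w_{i'k''}\bigr)$. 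The common-neighbor sum $\sum_{k''}w_{ik''}w_{i'k''}$ is what makes the final summation work out to $O(nd_{\mathcal{G}}^2)$, the same order as the diagonal. Ironically, the step you flagged as the main obstacle (the interaction covariance, which you bound by the matched weights $w_{ik'}w_{i'k}$) is handled correctly in your sketch; the step you did not notice is where the theorem's real difficulty, and the paper's key lemma, lives. As written, your proof does not establish the claimed bound.
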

\begin{proof}
    See Appendix 
\end{proof}
The proof idea is to first rewrite the variance as   
\[\frac{1}{n^2}\sum_{i=1}^n\sum_{j=1}^n \sum_{k\in \mathcal{M}_i}\sum_{l\in \mathcal{M}_j}\Cov(Y_iD_k,Y_jD_l),\]
where $D_i=\left(\frac{z_i}{p}-\frac{1-z_i}{1-p}\right)$, and then derive a bound as a function of $\boldsymbol{W}$ for $|\Cov(Y_iD_k,Y_jD_l)|$ under two different cases of $k=l$ and $k\ne l$. Then we use the assumption on the $1$ and $\infty$ norms of $\boldsymbol{W}$ to bound the summation of $|\Cov(Y_iD_k,Y_jD_l)|$. Our second result provides asymptotic lower bounds.
\begin{theorem}[Variance Lower Bound]\label{theorem: variance lower bound}
Let $\mathcal{G}$ be a $d_{\mathcal{G}}$-regular network. Suppose all potential outcomes are a constant, which complies with Assumptions \ref{ass: bounded outcomes}$\sim$ \ref{ass: potential outcomes}. The variance of the estimator defined in (\ref{eq: estimator}) exhibits the following lower bound:
\[\Var( \hat{\tau}(\mathcal{G}))=\Omega\left(\frac{d_{\mathcal{G}}^2}{np(1-p)}\right)\]
\end{theorem}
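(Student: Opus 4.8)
The plan is to exploit the fact that, with constant potential outcomes, the estimator collapses into a linear combination of \emph{independent}, mean-zero random variables, so that its variance can be computed \emph{exactly} rather than merely bounded. Write $D_j=\frac{z_j}{p}-\frac{1-z_j}{1-p}$ and take $Y_i(\vec z)\equiv c$ for a fixed nonzero constant $c$; this satisfies Assumptions \ref{ass: bounded outcomes}$\sim$\ref{ass: potential outcomes} (choose $\boldsymbol W=\boldsymbol 0$, since every $\psi_i^k\equiv 0$). Substituting into (\ref{eq: estimator}) and interchanging the order of summation, which is legitimate because $\mathcal{G}$ is symmetric so that $j\in\mathcal{M}_i\iff i\in\mathcal{M}_j$, gives
\[
\hat\tau(\mathcal{G})=\frac{c}{n}\sum_{j=1}^n |\mathcal{M}_j|\,D_j .
\]

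Next I would record two elementary facts about $D_j$ under the Uniform Bernoulli Design. Since $E[D_j]=\frac{p}{p}-\frac{1-p}{1-p}=0$, and $D_j$ equals $1/p$ with probability $p$ and $-1/(1-p)$ with probability $1-p$, we obtain $\Var(D_j)=E[D_j^2]=\frac1p+\frac1{1-p}=\frac{1}{p(1-p)}$. Crucially, because the $z_j$ are mutually independent and each $D_j$ is a function of $z_j$ alone, the $D_j$ are independent, so all cross-covariances vanish. This is exactly what makes the constant-outcome construction clean: the only source of variability is the assignment noise, with no $Y$-dependence introducing product terms.

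Consequently the variance is computed in closed form,
\[
\Var(\hat\tau(\mathcal{G}))=\frac{c^2}{n^2}\sum_{j=1}^n |\mathcal{M}_j|^2\,\Var(D_j)=\frac{c^2}{n^2 p(1-p)}\sum_{j=1}^n |\mathcal{M}_j|^2 .
\]
Invoking $d_{\mathcal{G}}$-regularity, every $|\mathcal{M}_j|\ge d_{\mathcal{G}}$, so that $\sum_{j=1}^n |\mathcal{M}_j|^2\ge n\,d_{\mathcal{G}}^2$, which yields $\Var(\hat\tau(\mathcal{G}))\ge \frac{c^2 d_{\mathcal{G}}^2}{n p(1-p)}=\Omega\!\left(\frac{d_{\mathcal{G}}^2}{np(1-p)}\right)$, matching the upper bound of Theorem \ref{theorem: variance upper bound} up to a constant and thereby establishing its tightness in $d_{\mathcal{G}}$.

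There is no genuinely hard step here; the construction is chosen precisely so that the variance is exact. The only points requiring care are: (i) selecting a single common nonzero constant $c$ rather than distinct $c_i$, so that no cancellation occurs when the contributions of the $d_{\mathcal{G}}$ neighbors of each $j$ are summed (distinct signed $c_i$ could in principle produce a vanishing $\sum_{i\in\mathcal{M}_j}c_i$); and (ii) the bookkeeping of the self-loop convention $G_{ii}=1$, which at most changes $|\mathcal{M}_j|$ from $d_{\mathcal{G}}$ to $d_{\mathcal{G}}+1$ and so does not affect the asymptotic order. The $\Omega$ is understood with $c$ held at order one while $d_{\mathcal{G}}$ and $n$ vary.
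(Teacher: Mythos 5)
Your proof is correct and takes essentially the same route as the paper's: both reduce the estimator under constant outcomes to a linear combination of the independent, mean-zero variables $D_j$ with $\Var(D_j)=\tfrac{1}{p(1-p)}$, so the variance is computed exactly and then bounded below by regularity. The only cosmetic difference is that you interchange the summation first to get $\frac{c}{n}\sum_j |\mathcal{M}_j| D_j$, whereas the paper expands $\sum_{i,j}|\mathcal{M}_i\cap\mathcal{M}_j|$ and invokes the identity $\sum_{i,j}|\mathcal{M}_i\cap\mathcal{M}_j|=\sum_k|\mathcal{M}_k|^2$ — the same algebra in a different order.
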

\begin{proof}
    See Appendix \ref{appendix: proof of var lower bound}.
\end{proof}
The lower bound shows that we can construct potential outcomes and surrogate networks satisfying the assumption of Theorem \ref{theorem: variance upper bound} such that the variance is at least order $\Omega\left(\frac{d_{\mathcal{G}}^2}{np(1-p)}\right)$. Therefore, the variance upper bound in Theorem 1 is tight. The value of our theoretical result on the variance is twofold:
\begin{enumerate}
    \item The result implies that the variance primarily depend on the degree of $\mathcal{G}$, while the structure of $\mathcal{A}$ contributes at most a constant factor. This enables bias-variance trade-off in practice: incorporating more edges in $\mathcal{G}$ can reduce the exogenous bias at the cost of a higher variance, and vice versa.
    \item We obtain a stronger theoretical guarantee compared with \cite{cortez2023-low-order-interaction} and \cite{eichhorn2024-pseudo-inverse}. They obtain a $O\left(\frac{d_{\mathcal{G}}^3}{np(1-p)}\right)$ bound under the linear potential outcome and require $\mathcal{A}=\mathcal{G}$. Our result improve the numerator in the upper bound from $d_{\mathcal{G}}^3$ to $d_{\mathcal{G}}^2$ under weaker assumptions. And we show that our bound is tight and can not be further improved.
\end{enumerate}

We provide simulation result to verify our theoretical findings on the variance bound. Furthermore, we numerically compare the empirical bias and variance against the cluster based design, under synthetic networks in Section \ref{section: experiment}. 

\section{Inference}\label{section: inference}
In this section, we improve the variance estimation in \cite{cortez2023-low-order-interaction} by a much more efficient variance estimator. We first state assumption for asymptotic inference. Define $\sigma_{\mathcal{G}}^2=\Var(\hat\tau(\mathcal{G}))$.
\begin{assumption}[Non-degeneracy]\label{ass: var non degenerate}
    \[ \lim\inf_{n\rightarrow\infty} \ n\sigma_{\mathcal{G}}^2\backslash d_{\mathcal{G}}^2>0\]
\end{assumption}
This is a standard condition and reasonable to impose in light of the bounds on the variance derived in Theorem \ref{theorem: variance upper bound} and \ref{theorem: variance lower bound}.

\subsection{Variance Estimator}
Our insights for the variance estimator are from Theorem 4 in \citet{leung2022-spatial-cluster}. Define $T_{ij}=Y_i\left(\frac{z_j}{p}-\frac{1-z_j}{1-p}\right)$, $T_{i}=\sum_{j\in \mathcal{M}_i}Y_{ij}$ and $I_{ij}=\mathbbm{1}\{\mathcal{M}_i\cap\mathcal{M}_j\ne\emptyset\}$. Our proposed variance estimator is
\begin{align}\label{eq: variance estimator}
    \hat\sigma_{\mathcal{G}}^2=\frac{1}{n^2}\sum_{i=1}^n\sum_{j=1}^n ( T_i-\hat\tau(\mathcal{G}))( T_j-\hat\tau(\mathcal{G}))I_{ij},
\end{align}

For the ease of theoretical analysis, we impose another assumption on the potential outcome functions.
\begin{assumption}\label{ass: Lipchitz second derivative Y_i}
    Define $\phi_i^{kl}(\vec z_{-\{k,l\}})=\psi_i^k(\vec z_{-\{k,l\}}, z_l=0)-\psi_i^k(\vec z_{-\{k,l\}}, z_l=1)$, then
    \begin{align*}
        |\phi_i^{kl}(\vec z_{-\{k,l,j\}},z_j=1)-\phi_i^{kl}(\vec z_{-\{k,l,j\}},z_j=0)|\le C_0 w_{ik}w_{il}w_{ij},\; \forall k\ne l, l\ne j,j\ne k
    \end{align*}
\end{assumption}
This is another "smoothness" assumption regarding the potential outcome functions. To build intuition, reconsider the example given after Assumption \ref{ass: potential outcomes}. We claim that if $f_i$ have bounded and $L$-Lipschitz continuous second-order derivative, then Assumption \ref{ass: Lipchitz second derivative Y_i} is satisfied. The proof for this claim is simple and thus omitted. We believe that such an assumption is not restrictive and will not undermine the effectiveness of the proposed variance estimator.

 The next theorem is used to show the asymptotic property of this variance estimator

\begin{theorem}\label{theorem: var estimator bias}
Under assumptions \ref{ass: bounded outcomes} to \ref{ass: Lipchitz second derivative Y_i}, and assuming the treatment probability $p$ is fixed, as well as $d_{\mathcal{G}}^6=o(n)$, we have
\[\lim_{n\rightarrow\infty}\frac{n}{d_{\mathcal{G}}^2}(\hat\sigma_{\mathcal{G}}^2-\sigma_{\mathcal{G}}^2)\rightarrow O(\delta)+\mathcal{R}_{\mathcal{G}},\]
where
\[\mathcal{R}_{\mathcal{G}}=\frac{1}{nd_{\mathcal{G}}^2}\sum_{i=1}^n\sum_{j=1}^n [E( T_i)-E(\hat\tau(\mathcal{G}))][E( T_j)-E(\hat\tau(\mathcal{G}))]I_{ij},\]
and
\[E(T_i)=\sum_{k\in \mathcal{M}_i}E(\psi_i^k(\vec z_{-\{k\}})),\;\forall i\in \{1,...,n\}.\]
\end{theorem}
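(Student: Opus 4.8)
The plan is to write $\frac{n}{d_{\mathcal{G}}^2}(\hat\sigma_{\mathcal{G}}^2-\sigma_{\mathcal{G}}^2)=\frac{n}{d_{\mathcal{G}}^2}\big(E(\hat\sigma_{\mathcal{G}}^2)-\sigma_{\mathcal{G}}^2\big)+\frac{n}{d_{\mathcal{G}}^2}\big(\hat\sigma_{\mathcal{G}}^2-E(\hat\sigma_{\mathcal{G}}^2)\big)$ and to treat the deterministic bias of the estimator and its stochastic fluctuation separately. For the bias I would first exploit $\hat\tau(\mathcal{G})=\frac1n\sum_k T_k$ and set $\mu_i=E(T_i)$, $\bar\mu=E(\hat\tau(\mathcal{G}))$, $\tilde T_i=T_i-\mu_i$, $c_i=\mu_i-\bar\mu$, so that $T_i-\hat\tau(\mathcal{G})=\tilde T_i+c_i-\bar{\tilde T}$ with $\bar{\tilde T}=\frac1n\sum_k\tilde T_k$. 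Expanding the product and taking expectations term by term, using $E(\tilde T_i)=0$, $E(\tilde T_i\tilde T_j)=\Cov(T_i,T_j)$ and $E(\bar{\tilde T}^2)=\sigma_{\mathcal{G}}^2$, yields
\[E(\hat\sigma_{\mathcal{G}}^2)=\frac1{n^2}\sum_{i,j}I_{ij}\Cov(T_i,T_j)+\frac1{n^2}\sum_{i,j}I_{ij}c_ic_j-\frac2{n^3}\sum_{i,j}I_{ij}\sum_k\Cov(T_i,T_k)+\sigma_{\mathcal{G}}^2\,\frac1{n^2}\sum_{i,j}I_{ij}.\]

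Next I would match each piece to the claimed limit. The first sum differs from $\sigma_{\mathcal{G}}^2=\frac1{n^2}\sum_{i,j}\Cov(T_i,T_j)$ only by the truncated pairs $-\frac1{n^2}\sum_{i,j}(1-I_{ij})\Cov(T_i,T_j)$. Since $T_i$ is a function of the treatments on $\mathcal{N}_i\cup\mathcal{M}_i$, we have $\Cov(T_i,T_j)=0$ once $(\mathcal{N}_i\cup\mathcal{M}_i)\cap(\mathcal{N}_j\cup\mathcal{M}_j)=\emptyset$, whereas $I_{ij}$ only records $\mathcal{M}_i\cap\mathcal{M}_j\neq\emptyset$; hence the surviving pairs are precisely those whose dependence is carried by a true edge absent from $\mathcal{G}$. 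Bounding each such covariance by the weights through Assumption \ref{ass: potential outcomes} and charging the edge mass to Assumption \ref{ass: gap between A, G} gives $\frac{n}{d_{\mathcal{G}}^2}\cdot\frac1{n^2}\sum_{i,j}(1-I_{ij})\Cov(T_i,T_j)=O(\delta)$. The second sum equals $\frac{d_{\mathcal{G}}^2}{n}\mathcal{R}_{\mathcal{G}}$ by the definition of $\mathcal{R}_{\mathcal{G}}$, so after normalization it contributes exactly $\mathcal{R}_{\mathcal{G}}$. The cross term and the $\sigma_{\mathcal{G}}^2$-term are negligible: using $\sum_j I_{ij}=O(d_{\mathcal{G}}^2)$, the row bound $\sum_k|\Cov(T_i,T_k)|=O(d_{\mathcal{G}}^2/(p(1-p)))$ implicit in the proof of Theorem \ref{theorem: variance upper bound}, and that same theorem's bound on $\sigma_{\mathcal{G}}^2$, each is $O(d_{\mathcal{G}}^4/n^2)$ and therefore $o(d_{\mathcal{G}}^2/n)$ whenever $d_{\mathcal{G}}^2=o(n)$.

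For the fluctuation I would bound $\Var(\hat\sigma_{\mathcal{G}}^2)$ directly and invoke Chebyshev. After the expansion the dominant random object is the quadratic form $U=\frac1{n^2}\sum_{i,j}\tilde T_i\tilde T_j I_{ij}$, whose variance is $\frac1{n^4}\sum_{i,j,i',j'}I_{ij}I_{i'j'}\Cov(\tilde T_i\tilde T_j,\tilde T_{i'}\tilde T_{j'})$. The local dependence of the $T_i$ forces a term to vanish unless the four supports form a single connected cluster, which I would count as $O(nd_{\mathcal{G}}^6)$ surviving quadruples; with the crude bound $|\tilde T_i|=O(d_{\mathcal{G}})$ each term is $O(d_{\mathcal{G}}^4)$, so $\Var(U)=O(d_{\mathcal{G}}^{10}/n^3)$ and $\frac{n^2}{d_{\mathcal{G}}^4}\Var(U)=O(d_{\mathcal{G}}^6/n)\to0$ exactly when $d_{\mathcal{G}}^6=o(n)$. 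Assumption \ref{ass: Lipchitz second derivative Y_i} enters here to keep the relevant higher-order covariances of order $w_{ik}w_{il}w_{ij}$; the linear-in-$\tilde T$ and $\bar{\tilde T}$ remainders are handled the same way and are of strictly smaller order. Chebyshev then gives $\frac{n}{d_{\mathcal{G}}^2}(\hat\sigma_{\mathcal{G}}^2-E(\hat\sigma_{\mathcal{G}}^2))\to0$ in probability, completing the claim.

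The main obstacle will be the $O(\delta)$ accounting in the truncation step: I must show that every pair with $I_{ij}=0$ yet $\Cov(T_i,T_j)\neq0$ can be attributed to a missing edge with $A_{ij}(1-G_{ij})=1$, and that the magnitude of that covariance is proportional to the corresponding weight $w$, so that summing and applying Assumption \ref{ass: gap between A, G} produces a genuine factor $\delta$ rather than an uncontrolled constant; the delicate point is that $Y_i$ reacts to the entire true neighborhood $\mathcal{N}_i$, not merely to $\mathcal{M}_i$. A secondary difficulty is making the connected-cluster enumeration tight enough to yield the power $d_{\mathcal{G}}^6$ and no larger, since it is this count that the moment condition $d_{\mathcal{G}}^6=o(n)$ is calibrated to absorb.
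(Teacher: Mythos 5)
Your bias/fluctuation split and the exact expansion of $E(\hat\sigma_{\mathcal{G}}^2)$ are correct bookkeeping (the cross term and the $\sigma_{\mathcal{G}}^2\cdot\frac{1}{n^2}\sum_{ij}I_{ij}$ term are indeed $O(d_{\mathcal{G}}^4/n^2)$ and negligible, and the $c_ic_j$ sum reproduces $\mathcal{R}_{\mathcal{G}}$ exactly), and this organization is arguably cleaner than the paper's. But the two steps you yourself flag as obstacles are genuine gaps, and they share a single missing idea: the paper's proof introduces localized variables $\tilde T_{ik}=E(f_i(\vec z)D_k\mid \vec z_{\mathcal{M}_i})$, conditional expectations measurable with respect to the surrogate neighborhood alone, runs the dependence arguments on these, and pushes the discrepancy $T_i-\tilde T_i$ into the $O(\delta)$ error. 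Without that device your concentration step fails outright: you count connected quadruples as $O(nd_{\mathcal{G}}^6)$ by appealing to "local dependence of the $T_i$", but $T_i$ is a function of the treatments on $\mathcal{N}_i\cup\mathcal{M}_i$, and the paper explicitly allows $|\mathcal{N}_i|$ to be arbitrarily large (no bound on $d_{\mathcal{A}}$ is assumed in Theorem \ref{theorem: var estimator bias}). The dependency graph of the raw $T_i$'s is therefore not controlled by $d_{\mathcal{G}}$ --- in the extreme case where all units share one true neighbor, no fourth-order covariance vanishes and the quadruple count is $n^4$ --- so your bound $\Var(U)=O(d_{\mathcal{G}}^{10}/n^3)$ does not follow; one would instead need weight-refined bounds on fourth-order covariances, which Assumptions \ref{ass: potential outcomes} and \ref{ass: Lipchitz second derivative Y_i} do not deliver by simple counting. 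The paper's term (\romannumeral3) is exactly your $U$, but built from the $\mathcal{M}$-measurable $\tilde T_i$, for which $I_{ij}=0$ genuinely implies independence, so the counting closes and $d_{\mathcal{G}}^6=o(n)$ does what it is calibrated to do.

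The second gap is the $O(\delta)$ truncation accounting. You correctly observe that every pair with $I_{ij}=0$ yet $\Cov(T_i,T_j)\neq 0$ involves a missing edge, but "bounding each such covariance by the weights through Assumption \ref{ass: potential outcomes}" cannot produce a factor $\delta$: the generic covariance bound (Proposition \ref{prop: Cov(Y_iD_k,Y_jD_l) bound}) is proportional to products of weights with an $O(1)$ constant, regardless of whether the edges carrying the dependence are present in $\mathcal{G}$, so summing it over the $I_{ij}=0$ pairs gives a contribution of the same order $nd_{\mathcal{G}}^2$ as the main term rather than $\delta$ times it. Extracting $\delta$ requires quantitative smallness of the non-$\mathcal{M}$-measurable part of $T_i$, which is what Lemma \ref{lemma: appendix var decompose} supplies: with $\tilde f_i=f_i-E(f_i\mid\vec z_{\mathcal{M}_i})$ one gets $|\tilde f_i|\le \delta C_0$, $|\tilde\psi_i^k|\le\delta C_0 w_{ik}$, $|\tilde\phi_i^{kl}|\le\delta C_0 w_{ik}w_{il}$ (this is where Assumption \ref{ass: Lipchitz second derivative Y_i} is actually engaged, and Assumption \ref{ass: gap between A, G} enters through these pointwise bounds), and then the proof of Proposition \ref{prop: Cov(Y_iD_k,Y_jD_l) bound} is rerun with every constant shrunk by $\delta$. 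So your skeleton is salvageable, but both obstacles you name are real, and both are resolved by the same conditioning-on-$\vec z_{\mathcal{M}_i}$ construction that your proposal lacks; without it, neither the $O(\delta)$ term nor the $d_{\mathcal{G}}^6=o(n)$ rate can be justified.
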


\begin{proof}
See Appendix \ref{appendix: proof of var estimator endogenous bias}.
\end{proof}

The proof follows the general outline provided in Theorem 4 of \citet{leung2022-spatial-cluster}, with the primary difference being the method used to derive the bound under our specific context. The assumption $d_{\mathcal{G}}^6=o(n)$ ensures that the constructed surrogate network remains sufficiently sparse. The bias term $O(\delta)$ arises from the surrogate network, indicating that missing edges may increase the likelihood of inaccurate variance estimation. The term $\mathcal{R}_{\mathcal{G}}$ is $O(1)$ and typically non-zero due to unit-level heterogeneity. For instance, under the conditions of Corollary \ref{corollary: unbiased estimator}, we have
\[\mathcal{R}_{\mathcal{G}}=\frac{1}{nd_{\mathcal{G}}^2}\sum_{i=1}^n\sum_{j=1}^n [Y_i(\vec 1)-Y_i(\vec 0)-\text{TTE}][Y_j(\vec 1)-Y_j(\vec 0)-\text{TTE}]I_{ij},\]
which usually does not approach zero asymptotically, except in the special case where treatment effects are homogeneous across all units, i.e., $Y_i(\vec 1)-Y_i(\vec 0)=\text{TTE}$ for all $i\in \{1,...,n\}$. As noted in \cite{leung2022-spatial-cluster}, this bias term is analogous to the bias present in the Neyman conservative estimator for the variance of the difference-in-means estimator. It is well-known that achieving a consistent estimation of the variance is not feasible in this context.

It is important to mention that the original work by \citep{cortez2023-low-order-interaction} utilized a variance estimator from \cite{aronow2017-exposure-mapping}, which was shown to be hundreds of times greater than the empirical variance in numerical simulations. Such an estimator lacks sufficient statistical power for practical use. In Section \ref{section: experiment}, we will provide numerical evidence to support the effectiveness of our proposed estimator.

\subsection{Hypothesis Testing}\label{section: Hypothesis Testing}

In this section, we demonstrate how to use the pseudo inverse estimator for testing different hypotheses. Practitioners are often interested in knowing whether their treatment leads to a change in the TTE and whether network effects are present in their experiment.

\paragraph{Testing Total Treatment Effect.}
We first explore methods for rejecting the null hypothesis that the TTE is zero. A conservative approach is to use Chebyshev's inequality, which states
\begin{align*}
    P(|\hat\tau(\mathcal{G})-E(\hat\tau(\mathcal{G}))|>k\sigma_{\mathcal{G}})\le \frac{1}{k^2},
\end{align*}
for any real number $k>0$. By rejecting the null hypothesis when $|\hat\tau(\mathcal{G})|>\sigma_{\mathcal{G}}/\sqrt{\alpha}$, the type-I error of our test is guaranteed to be no greater than $\alpha$. A less conservative approach assumes that $(\hat\tau(\mathcal{G})-E(\hat\tau(\mathcal{G})))/\sigma_{\mathcal{G}}$ follows a standard normal distribution, allowing us to construct the $(1-\alpha)\times 100\%$ confidence interval
\[(\hat\tau(\mathcal{G})+ \sigma_{\mathcal{G}}z_{\alpha/2}, \hat\tau(\mathcal{G})+ \sigma_{\mathcal{G}}z_{1-\alpha/2}),\]
where $z_{\alpha/2}$ and $z_{1-\alpha/2}$ are the $\alpha/2$ and $1-\alpha/2$ quantiles of the standard normal distribution. The following lemma establishes the asymptotic normality when the dependency network $\mathcal{A}$ has a bounded degree:

\begin{lemma}[Asymptotic Normality]
Under assumptions \ref{ass: bounded outcomes} to \ref{ass: potential outcomes} and \ref{ass: var non degenerate}, assuming the degree of the dependency network $d_{\mathcal{A}}$ is $O(1)$ and the treatment probability $p$ is fixed, $(\hat\tau(\mathcal{G})-E(\hat\tau(\mathcal{G})))/\sigma_{\mathcal{G}}$ converges in probability to a standard normal random variable as $n\rightarrow \infty$.
\end{lemma}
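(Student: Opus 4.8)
The plan is to prove the result through Stein's method for sums of locally dependent random variables, exploiting the fact that the only randomness is the vector $\vec z$ of independent Bernoulli treatments. First I would recast the normalized statistic in additive form. With $D_j=\frac{z_j}{p}-\frac{1-z_j}{1-p}$ and $T_i=Y_i\sum_{j\in\mathcal{M}_i}D_j$ as in Section \ref{section: inference}, set $\tilde T_i=T_i-E(T_i)$. Since $n\hat\tau(\mathcal{G})=\sum_{i=1}^n T_i$ and $\Var(\sum_{i=1}^n T_i)=n^2\sigma_{\mathcal{G}}^2$, we have
\[\frac{\hat\tau(\mathcal{G})-E(\hat\tau(\mathcal{G}))}{\sigma_{\mathcal{G}}}=\frac{\sum_{i=1}^n \tilde T_i}{\sqrt{\Var(\sum_{i=1}^n T_i)}}=:W,\]
and the goal reduces to showing $W\xrightarrow{d}Z\sim N(0,1)$ (interpreting the stated convergence as convergence in distribution).

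Second, I would pin down the dependency structure. Because the coordinates of $\vec z$ are independent and $T_i$ is a function only of $\{z_k:k\in\mathcal{M}_i\cup\mathcal{N}_i\}$ — entering through $Y_i$ via the true neighborhood $\mathcal{N}_i$ and through the factors $D_j$ via the surrogate neighborhood $\mathcal{M}_i$ — the family $\{\tilde T_i\}_{i=1}^n$ admits a dependency graph in which $i$ and $j$ are adjacent only if $(\mathcal{M}_i\cup\mathcal{N}_i)\cap(\mathcal{M}_j\cup\mathcal{N}_j)\neq\emptyset$. Under the hypothesis $d_{\mathcal{A}}=O(1)$ (together with the surrogate being no denser, so that $d_{\mathcal{G}}=O(1)$ as well, consistent with the sparser-network construction motivating $\mathcal{G}$), every treatment index is relevant to only $O(1)$ units, so the maximum degree $D$ of this dependency graph is $O(1)$. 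Simultaneously, Assumption \ref{ass: bounded outcomes} gives $|D_j|\le\max(1/p,1/(1-p))$ and $|Y_i|\le\bar Y$, so $|\tilde T_i|\le 2\bar Y\,d_{\mathcal{G}}\max(1/p,1/(1-p))=O(1)$, whence $E|\tilde T_i|^3$ and $E\tilde T_i^4$ are uniformly $O(1)$.

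Third, I would supply the variance lower bound that prevents the normalization from collapsing. By Assumption \ref{ass: var non degenerate}, $n\sigma_{\mathcal{G}}^2/d_{\mathcal{G}}^2\ge c>0$ for all large $n$, so $\Var(\sum_i T_i)=n^2\sigma_{\mathcal{G}}^2\ge c\,n\,d_{\mathcal{G}}^2=\Omega(n)$. With these ingredients I would invoke the dependency-graph Stein bound (Baldi--Rinott; see Ross, \emph{Fundamentals of Stein's Method}), which controls the Wasserstein distance between $W$ and $Z$ by a constant multiple of
\[\frac{D^2\sum_{i=1}^n E|\tilde T_i|^3}{\Var(\sum_i T_i)^{3/2}}+\frac{D^{3/2}\bigl(\sum_{i=1}^n E\tilde T_i^4\bigr)^{1/2}}{\Var(\sum_i T_i)}.\]
Plugging in $D=O(1)$, $\sum_i E|\tilde T_i|^3=O(n)$, $\sum_i E\tilde T_i^4=O(n)$, and $\Var(\sum_i T_i)=\Omega(n)$, both terms are $O(n^{-1/2})\to 0$; hence $W\xrightarrow{d}N(0,1)$, and since the limit has continuous CDF this upgrades to convergence in Kolmogorov distance, justifying the normal confidence intervals constructed above.

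The main obstacle is twofold, and both pieces are where the stated hypotheses are consumed. The first is establishing that the dependency degree $D$ is genuinely $O(1)$: this is delicate precisely because $T_i$ couples the true neighborhood $\mathcal{N}_i$ (through $Y_i$) with the surrogate neighborhood $\mathcal{M}_i$ (through the $D_j$ terms), so the adjacency condition must account for both, and it is here that $d_{\mathcal{A}}=O(1)$ — and the control of the surrogate degree — is essential. The second is guaranteeing the matching lower bound $\Var(\sum_i T_i)=\Omega(n)$, which is exactly what Assumption \ref{ass: var non degenerate} is designed to provide; without it the denominators in the Stein bound could decay too quickly and the argument would break down. I do not anticipate subtle cancellations: the moment bounds follow immediately from bounded outcomes, and the remaining work is to verify that locality and non-degeneracy combine to drive the Stein ratio to zero.
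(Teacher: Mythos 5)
Your proposal is correct and is essentially the paper's own argument made explicit: the paper's entire proof of this lemma is a citation to Theorem 3 of \citet{cortez2023-low-order-interaction}, which is precisely the Stein's-method CLT for sums with a bounded-degree dependency graph that you reconstruct via the Baldi--Rinott bound, with the same three ingredients (local dependence from bounded network degree, uniformly bounded moments from Assumption \ref{ass: bounded outcomes}, and non-collapse of the normalization from Assumption \ref{ass: var non degenerate}); your reading of the paper's ``converges in probability'' as convergence in distribution is also the right one.

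One step deserves scrutiny, though it separates you from the lemma's literal statement rather than from the paper. Your deduction of $d_{\mathcal{G}}=O(1)$ from ``the surrogate being no denser'' is not supported by the framework: nothing forces $\mathcal{G}$ to be a subgraph of $\mathcal{A}$ (Assumption \ref{ass: gap between A, G} penalizes only true edges missing from $\mathcal{G}$, not spurious extra ones), and the paper's bias--variance discussion explicitly contemplates adding edges to $\mathcal{G}$ to reduce bias, so the stated hypotheses ($d_{\mathcal{A}}=O(1)$ alone) place no bound on $d_{\mathcal{G}}$. If $d_{\mathcal{G}}$ grows with $n$, your dependency degree becomes $O(d_{\mathcal{G}}^2)$, your moments grow polynomially in $d_{\mathcal{G}}$, and the Stein ratio is of order $d_{\mathcal{G}}^4/\sqrt{n}$ rather than $n^{-1/2}$, so one would need something like $d_{\mathcal{G}}^8=o(n)$ for the argument to close. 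Bounded (or slowly growing) surrogate degree is therefore an additional hypothesis your proof consumes --- but the paper's proof tacitly consumes it too, since the cited theorem covers the known-network, bounded-degree case; the paper concedes exactly this when it admits that ``there is no off-the-shelf central limit theorem that directly applies to the surrogate network setting'' and falls back on empirical evidence. You are, if anything, more transparent than the source about where this condition enters; the clean fix is to state $d_{\mathcal{G}}=O(1)$ (or a rate condition such as $d_{\mathcal{G}}^8=o(n)$) as an explicit assumption rather than derive it from a sparsity heuristic.
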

\begin{proof}
See the proof for Theorem 3 in \cite{cortez2023-low-order-interaction}.
\end{proof}

The proof relies on a well-established central limit theorem based on Stein's method, which requires $d_{\mathcal{A}}$ to be $O(1)$. While there is no off-the-shelf central limit theorem that directly applies to the surrogate network setting, we find that the normal approximation performs well in practice, even when the assumptions are violated. We provide further evidence of this through simulation in Section \ref{section: experiment}.

\paragraph{Testing Network Interference.}
Testing for network interference is essential for social platforms, as it can lead to inaccurate results in traditional A/B testing. Therefore, a crucial task is to test the null hypothesis of SUTVA. Note that the difference-in-means estimator
\[\hat\tau_{DIM}=\frac{1}{n}\sum_{i=1}^nY_i\left(\frac{z_i}{p}-\frac{1-z_i}{1-p}\right)\]
is equivalent to our estimator when $\mathcal{M}_i=\{i\}$. Based on Lemma \ref{lemma: expectation of estimator}, the expectation of our estimator under SUTVA is the same as the expectation of the difference-in-means estimator. This inspires us to combine the two estimators to test the null hypothesis of SUTVA. Similarly to Lemma \ref{lemma: expectation of estimator}, we can show
\[E(\hat{\tau}(\mathcal{G})-\hat\tau_{DIM})=\frac{1}{n}\sum_{i=1}^n\sum_{k\in\mathcal{M}_i\backslash\{i\}}E(\psi_i^k(\vec z_{-\{k\}})),\]
which equals zero under the null hypothesis of SUTVA. To estimate the variance of this new estimator, we use a variance estimator analogous to (\ref{eq: variance estimator}), replacing $T_i$ with $T_i'=\sum_{j\in \mathcal{M}_i\backslash\{i\}}T_{ij}$ and $\hat{\tau}(\mathcal{G})$ with $\hat{\tau}(\mathcal{G})-\hat\tau_{DIM}$. All subsequent analysis for (\ref{eq: variance estimator}) applies here as well. In practice, we find this approach to be effective in testing for the existence of network interference. We present our empirical findings in the next section.

\section{Experiments}\label{section: experiment}
There are four goals for this section. First, to investigate the variance bound in Theorem \ref{theorem: variance upper bound}. Second, to validate the asymptotic normal distribution under the surrogate network setting. Third, to compare our approach with difference-in-means estimator under both cluster-based and Bernoulli randomization. It is important to note that the pseudo inverse estimator is guaranteed to exhibit lower variance compared to the Horvitz-Thompson estimator \citep{eichhorn2024-pseudo-inverse}, which is omitted from the simulations for this very reason. Lastly, this section seeks to explore the empirical performance of our approach with a real-world experiment.

\subsection{Verification of theoretical results}\label{section: verify results}
 \textbf{Test Instances: }
 
 We let surrogate network $\mathcal{G}$ be a Erdős–Rényi network, which was chosen uniformly from the collection of all graphs which have $n$ nodes and $n\bar d$ edges. We interpret $\bar d$ as the average degree of $\mathcal{G}$. We adhere to the model presented in the example following Assumption \ref{ass: potential outcomes}, where the potential outcomes are defined according to (\ref{eq: example model}). We generate $\vec\alpha$ from i.i.d U$(0,1)$ distribution, the diagonal matrix $\boldsymbol{\Delta}$ with each diagonal entry drawn from a mutually independent U$(0,\gamma_1)$ distribution, and the stochastic matrix $\boldsymbol{P}=\{\gamma_2G_{ij}/\sum_{k}G_{ik}\}_{n\times n}$. Herein $\gamma_1$ represents the maximum direct treatment effect, and $\gamma_2$ denotes the sharing probability. This model naturally creates a dependency network $\mathcal{A}$ that diverges from $\mathcal{G}$, serving as a tool to verify our theoretical findings.
 
\textbf{Verify Theorem \ref{theorem: variance upper bound}: }

We define the potential outcome function as $Y_i=f(e_i)$, where $\vec{e}$ is derived from (\ref{eq: linear exposure}). We simulate the empirical variance of our estimator through 1000 replications, varying the choice of $\bar{d}$. The test configurations are set at $n=10000$, $p=0.5$, $\gamma_1=1$, $\gamma_2=0.5$, and $\bar{d}\in\{10,20,30,40,50,60,70,80,90,100\}$. We examine two distinct outcome functions. The first is a continuous function $f(x)=\sqrt{x}$, yielding a TTE$\approx0.3$. The second is a binary function $f(x)=\mathbbm{1}\{x>1\}$ with a TTE$\approx0.5$. We plot the empirical variance against the square of the average degree, $\bar{d}^2$. The findings are illustrated in Figure \ref{fig: verify variance}.

\begin{figure}[htbp]
    \centering 
    \begin{subfigure}{0.45\textwidth} 
        \includegraphics[width=\textwidth]{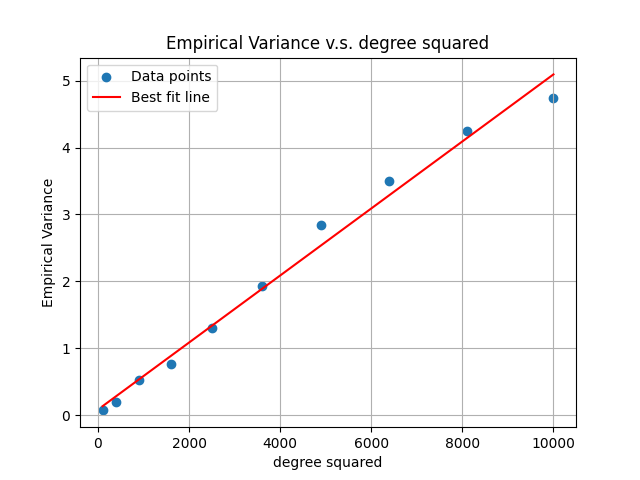}
        \caption{$f(x)=\sqrt{x}$} 
    \end{subfigure}
    \hspace{0.05\textwidth} 
    \begin{subfigure}{0.45\textwidth} 
        \includegraphics[width=\textwidth]{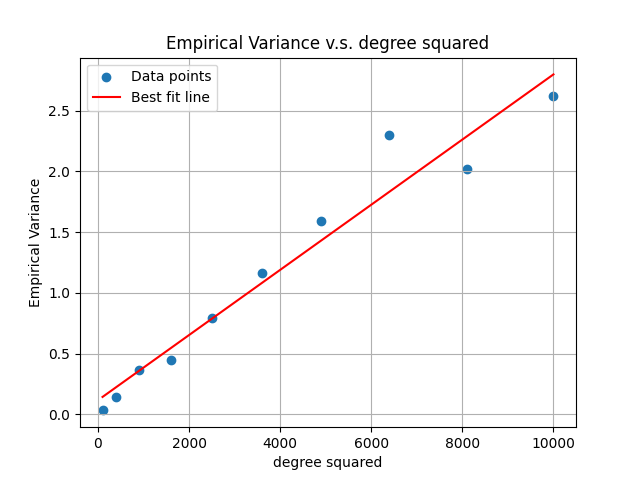}   
        \caption{$f(x)=\mathbbm{1}\{x>1\}$}
    \end{subfigure}
    \caption{Scatter plot of empirical variance v.s. $\bar d^2$} 
    \label{fig: verify variance} 
\end{figure}

In Figure \ref{fig: verify variance}, we add a best-fit line to ascertain whether a linear relationship exists between the empirical variance and the average degree $\bar{d}$. The results indicate that both scenarios exhibit a clear linear correlation, thereby confirming the accuracy of our variance bound.

\textbf{Verify approximate normality: }

We conduct simulations on the test instances with $f(x)=\mathbbm{1}\{x>1\}$, $n=10000$, $p=0.5$, $\gamma_1=1$, $\gamma_2=0.5$, and $\bar{d}\in\{10,20,30,40\}$ to assess the estimator's distribution for approximate normality. After obtaining 10,000 replications for each instance, we normalize the outcomes by their respective means and standard deviations. We plot the histogram of the normalized results against the density of a standard normal distribution in Figure \ref{fig: approximate normality}.

\begin{figure}[htbp]
    \centering 
    \begin{subfigure}{0.45\textwidth} 
        \includegraphics[width=\textwidth]{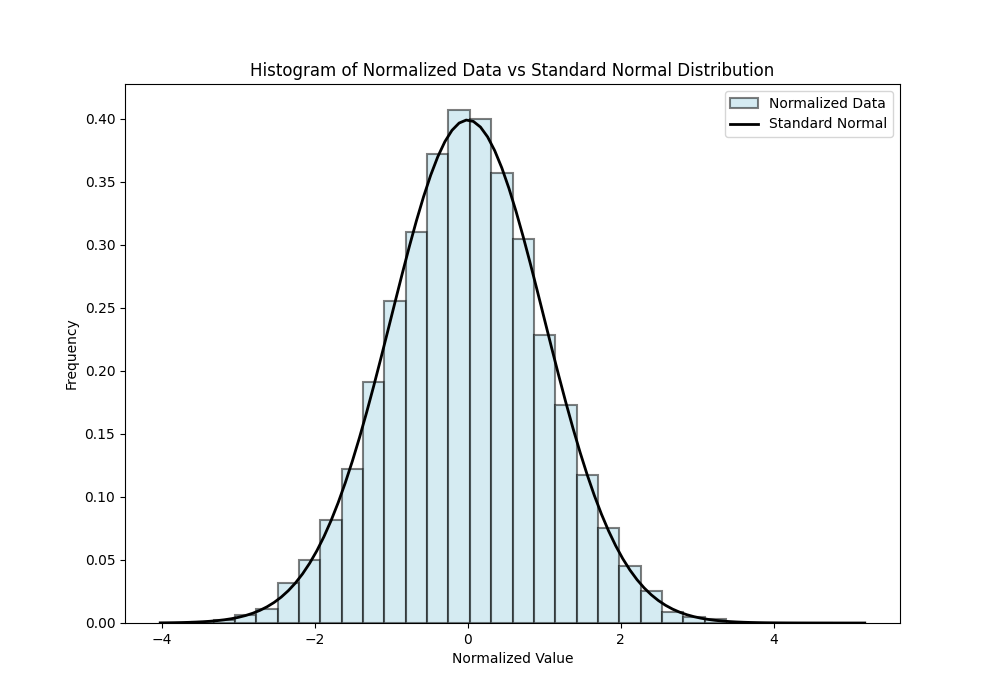} 
        \caption{$\bar d=10$} 
        \end{subfigure}
    \hspace{0.05\textwidth} 
    \begin{subfigure}{0.45\textwidth} 
        \includegraphics[width=\textwidth]{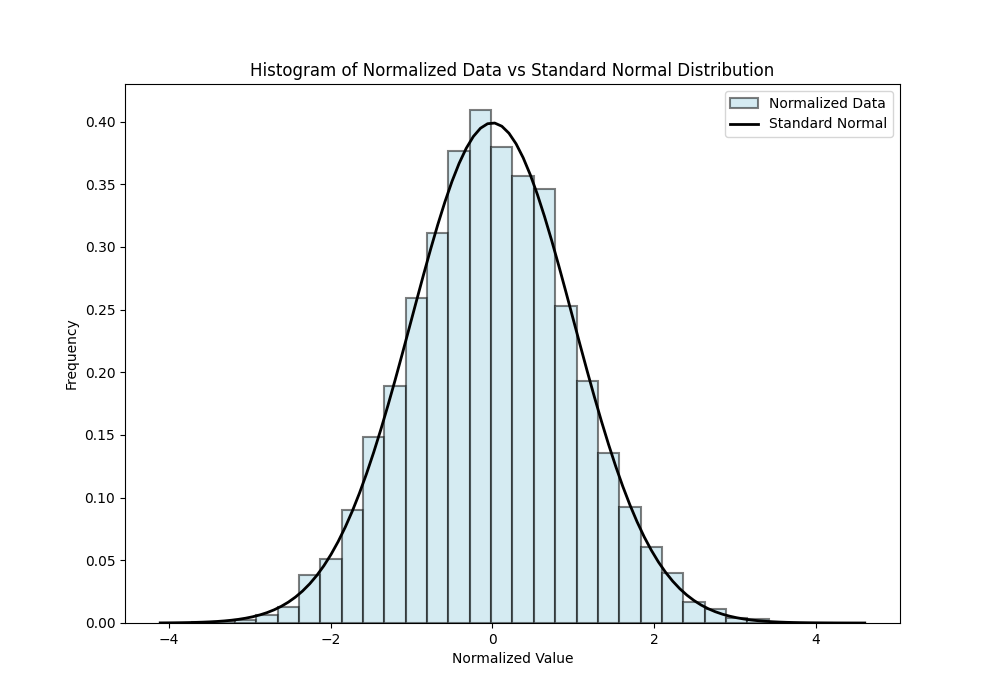} 
        \caption{$\bar d=20$} 
    \end{subfigure}
\\ 
    \begin{subfigure}{0.45\textwidth}
        \includegraphics[width=\textwidth]{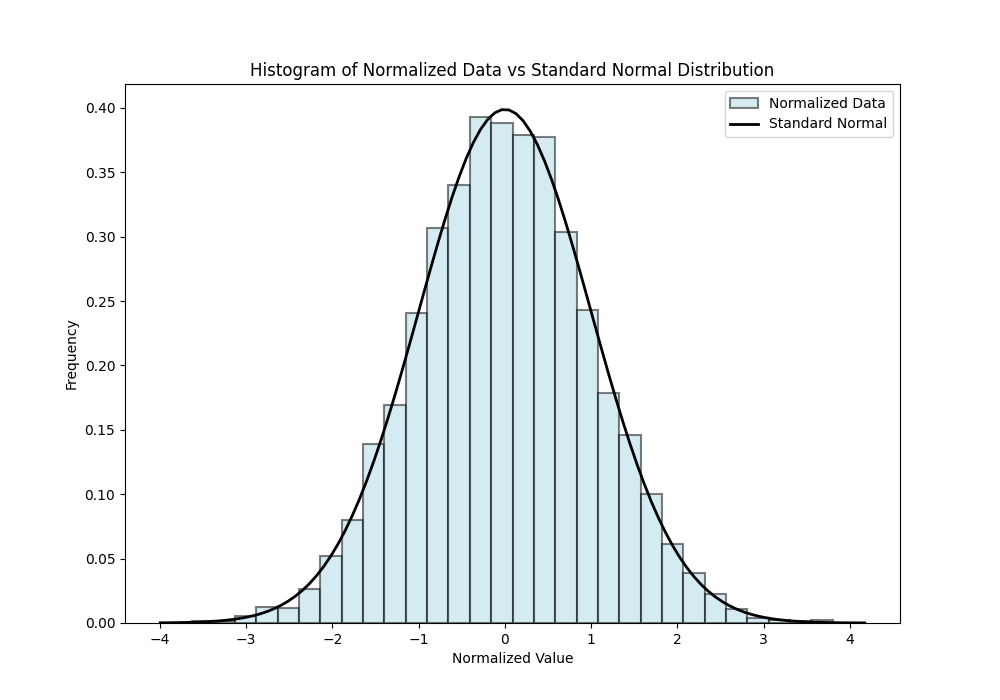} 
        \caption{$\bar d=30$}
    \end{subfigure}
    \hspace{0.05\textwidth} 
    \begin{subfigure}{0.45\textwidth} 
        \includegraphics[width=\textwidth]{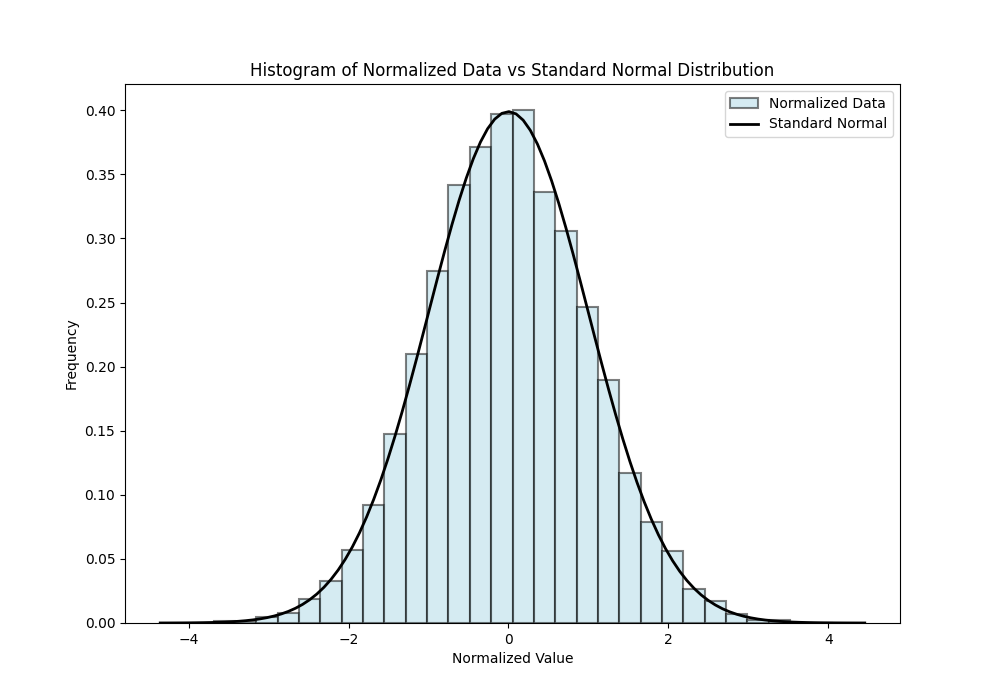}
        \caption{$\bar d=40$}
    \end{subfigure}
    \caption{Verification of approximate normality under different degrees}
    \label{fig: approximate normality}
\end{figure}

The simulations show that the estimator follows an approximately normal distribution under the surrogate network condition, provided that $\bar{d}$ is sufficiently small relative to $n$. Consequently, we can construct confidence intervals based on normal percentiles.

\textbf{Verify variance estimator: }

We compare the empirical variance with the estimated variance, computed using (\ref{eq: variance estimator}). We perform 1,000 simulations on test instances with identical parameters $n=10000$, $p=0.5$, $\gamma_1=1$, $\gamma_2=0.5$, but vary $f$ and $\bar{d}$. We calculate the mean and standard deviation of our variance estimator to identify the magnitude of potential bias. We denote the empirical variance, estimated variance, and the standard deviation of the estimated variance as $\sigma^2$,  $\hat\sigma^2$, and std$(\hat\sigma^2)$, respectively. The results are compiled in Table \ref{table: variance estimator}.
\renewcommand{\arraystretch}{1.5}
\begin{table}[htbp]
  \centering
  \caption{The performance of the proposed variance estimator}
    \begin{tabular}{c c c c c c c c c c}
      \toprule
      \multirow{2}{*}{$f(x)$} & \multicolumn{3}{c}{$\bar d=10$}& \multicolumn{3}{c}{$\bar d=20$}& \multicolumn{3}{c}{$\bar d=40$}\\
      \cline{2-10}
        & $\sigma^2$& $\hat\sigma^2$& std$(\hat\sigma^2)$& $\sigma^2$& $\hat\sigma^2$& std$(\hat\sigma^2)$& $\sigma^2$& $\hat\sigma^2$& std$(\hat\sigma^2)$\\
      \hline
      $\sqrt{x}$ & 0.07674  & 0.07606  & 0.00232  &0.24839   & 0.25755  & 0.00817  & 0.97396  & 0.84913  &  0.02718  \\
      $\mathbbm{1}\{x>1\}$ & 0.03837 & 0.03740 & 0.00128 &0.13519 & 0.13104&0.00484 &0.50640 &0.42184 & 0.01626 \\
      \bottomrule
    \end{tabular}\label{table: variance estimator}
\end{table}

Table \ref{table: variance estimator} reveals two insights. Firstly, the bias of our variance estimator escalates with the average degree $\bar{d}$. When $\bar{d}$ is considerably smaller than $n$, such as $\bar{d}=10$ and $\bar{d}=20$, the relative bias remains small, at approximately $2.5\%$ and $4\%$, respectively. However, as $\bar{d}$ increases to $40$, the bias rises to roughly $17\%$. This observation aligns with our theoretical finding in Theorem \ref{theorem: var estimator bias}, which suggests that the degree of the surrogate network should remain small. Practitioners can control the degree of the surrogate network to make the bias negligible. Secondly, the standard deviation of the estimated variance is relatively small, indicating that our variance estimator is reliable and stable in practical applications.

\subsection{Comparison between estimators}
We construct our test network from a subset of users residing in a specific region who have engaged in sharing behavior within the past 28 days. This network includes 100,015 nodes and 2,240,266 edges, where each node represents an individual and each edge signifies the presence of information sharing. The potential outcome model used here aligns with the one in Section \ref{section: verify results}, with parameters set to $n=10000$, $p=0.5$, $\gamma_1=1$, and $\gamma_2=0.5$.

We compare the pseudo inverse estimator with the difference-in-means estimator under Bernoulli and cluster-based randomization. The difference-in-means estimator, applied under Bernoulli randomization, serves as a benchmark in our simulation study, as it is commonly used to estimate the direct treatment effect. For cluster-based randomization, we employ a community detection algorithm known as Leiden \citep{leiden}, which generates 828 clusters. Our interest lies in determining which approach has better performance. We compare the bias, empirical variance ($\sigma^2$) and the mean square error (MSE) of three approaches under binary and continuous potential outcomes utilizing 1000 replication. The result is summarized in Table \ref{table: compare estimators}.
\renewcommand{\arraystretch}{1.5}
\begin{table}[htbp]
  \centering
  \caption{The performance of three different estimators}
    \begin{tabular}{c c c c c c c c c c}
      \toprule
      \multirow{2}{*}{$f(x)$} & \multicolumn{3}{c}{Difference-in-means Estimator}& \multicolumn{3}{c}{Cluster-based randomization}& \multicolumn{3}{c}{Pseudo Inverse Estimator}\\
      \cline{2-10}
        & Bias & $\sigma^2$& MSE & Bias & $\sigma^2$& MSE& Bias & $\sigma^2$& MSE\\
      \hline
      $\sqrt{x}$ & 0.28188& $7.0399e^{-7}$  & 0.07945  &0.22878   & $3.7554e^{-6}$ &  \textbf{0.05234}  & \textbf{0.20256}  & 0.03638  & 0.07741    \\
      $\mathbbm{1}\{x>1\}$ & 0.21351 & $4.8521e^{-7}$ & 0.04559 & 0.17888& $7.8015e^{-6}$&\textbf{0.03199} &\textbf{0.09589} &0.03974 & 0.04893 \\
      \bottomrule
    \end{tabular}\label{table: compare estimators}
\end{table}

Table \ref{table: compare estimators} yields two observations. Firstly, the pseudo inverse estimator demonstrates the smallest bias for both continuous and binary potential outcomes. The reason behind the bias of the cluster-based randomization is the inability to perfectly partition the test network. Only 28.7\% of edges connect endpoints within the same cluster, leading to an underestimation of interference effects. Secondly, while the pseudo inverse estimator exhibits the highest variance—a consequence of its variance scaling with the squared degree of the network—it becomes more advantageous as the number of nodes increases under a constant average degree, given that the MSE becomes dominated by bias.

\subsection{Application }
We apply our approach to a comprehensive real-world experiment conducted within WeChat, involving 53,603,004 nodes and 1,066,143,998 edges. The experimental design employs uniform Bernoulli randomization with a probability of $p=0.5$. We calculate the difference-in-means estimator $\hat{\tau}_{dim}$, the pseudo inverse estimator $\hat{\tau}_{pi}$, and the difference $\hat{\tau}_{pi}-\hat{\tau}_{dim}$ across 11 metrics that could potentially be affected by network interference. To estimate the variance of $\hat{\tau}_{dim}$, we use the Neyman estimator, while the approach outlined in Section \ref{section: inference} is utilized to estimate the variance of $\hat{\tau}_{pi}$ and $\hat{\tau}_{pi}-\hat{\tau}_{dim}$. The results are presented in Table \ref{table: real experiment}, with each row corresponding to a specific metric.
\renewcommand{\arraystretch}{1.5}
\begin{table}[htbp]
  \centering
  \caption{Results from a real experiment}
    \begin{tabular}{c c c c c c c c c c}
      \toprule
      \multirow{2}{*}{} & \multicolumn{3}{c}{$\hat{\tau}_{dim}$}& \multicolumn{3}{c}{$\hat{\tau}_{pi}$}& \multicolumn{3}{c}{$\hat{\tau}_{pi}-\hat{\tau}_{dim}$}\\
      \cline{2-10}
        &Value & Est. Var.& $p$-value & Value & Est. Var.& $p$-value &Value & Est. Var.& $p$-value \\
      \hline
      1&$1.001e^{-3}$&$1.814e^{-6}$&0.4571&0.1353&$4.708e^{-3}$&0.0485&0.1343&$4.467e^{-3}$&0.0444\\
      2&0.4988&$4.582e^{-3}$&$1.712e^{-13}$&1.6131&0.6177&0.0401&1.1142&0.5871&0.1459\\
      3&21.237&1.8816&$0.0000$&47.090&659.35&0.0667&25.852&624.12&0.3007\\
4&0.0116&$4.993e^{-6}$&$1.994e^{-7}$&0.0463&$4.140e^{-4}$&0.0228&0.0346&$3.955e^{-4}$&0.0811\\
5&$-1.215e^{-2}$&$8.148e^{-6}$&$2.078e^{-5}$&$-4.600e^{-4}$&$9.845e^{-4}$&0.9883&$-1.976e^{-3}$&$9.485e^{-4}$&0.9488\\
6&$1.500e^{-3}$&$3.624e^{-7}$&0.0127&$-4.600e^{-4}$&$9.180e^{-5}$&0.9617&$-1.960e^{-3}$&$8.909e^{-5}$&0.8355\\
7&$4.940e^{-3}$&$9.054e^{-7}$&$2.084e^{-7}$&0.0180&$1.096e^{-4}$&0.0859&0.0130&$1.062e^{-4}$&0.2062\\
8&$-1.808e^{-2}$&$5.544e^{-6}$&$1.598e^{-14}$&$-2.704e^{-2}$&$4.091e^{-4}$&0.1811&$-8.960e^{-3}$&$3.952e^{-4}$&0.6522\\
9&$3.781e^{-3}$&$3.600e^{-5}$&0.5285&0.2882&0.0277&0.0834&0.2844&0.0263&0.0792\\
10&0.0445&$6.751e^{-5}$&$6.063e^{-8}$&0.2199&0.0239&0.1550&0.1754&0.0233&0.2512\\
11&0.0637&$1.493e^{-4}$&$1.890e^{-7}$&0.3536&0.0475&0.1049&0.2899&0.0465&0.1787\\
      \bottomrule
    \end{tabular}\label{table: real experiment}
\end{table}

Among the 11 metrics, we identify network interference in 1 metric at a $95\%$ confidence level and in 2 metrics at a $90\%$ confidence level, based on the $p$-value of $\hat{\tau}_{pi}-\hat{\tau}_{dim}$. For these 3 metrics, the pseudo inverse estimator yields a significant difference compared to the difference-in-means estimator, indicating that the difference-in-means estimator might underestimate the interference effect. In the remaining metrics, the pseudo inverse estimator detects 1 significant total treatment effect at a $95\%$ confidence level and 2 at a $90\%$ confidence level, whereas the difference-in-means estimator identifies 7 significant total treatment effects at a $95\%$ confidence level. Considering the variance of the pseudo inverse estimator is larger than that of the difference-in-means estimator, its statistical power is lower when the interference effect is not negligible. We believe that the results presented in Table \ref{table: real experiment} demonstrate the pseudo inverse estimator's ability to discover network effects and serve as a valuable tool in real-world experimentation.

\section{Conclusion}

The pseudo inverse estimator represents a novel methodology for estimating the total treatment effect in the presence of network interference. This approach is versatile and can be adapted to various experimental designs, while also exhibiting good theoretical properties. When a firm decides to utilize the pseudo inverse estimator in real-world experimentation, two critical steps can significantly impact the reliability of the results. 

Firstly, the firm must identify an interference network where the estimator can be applied, referred to as the surrogate network in this paper. The quality of this surrogate network, characterized by its deviation from the actual interference network, will influence the bias, while the degree of the surrogate network will determine the estimator's variance. Incorporating additional edges into the surrogate network can reduce bias but may lead to increased variance, and vice versa. Accurate estimation heavily depends on the meticulous design of the surrogate network, which relies on the practitioner's domain expertise and experience. It is advisable to employ historical data for validation during the pre-experiment phase.

Secondly, in the post-experiment analysis, the firm requires a reliable variance estimator to ensure trustworthy statistical inference. We propose a new variance estimator that enhances the one in the original paper and investigate its asymptotic properties within our surrogate network framework. Our simulation results indicate that the proposed estimator performs well when the degree of the surrogate network is relatively small. Furthermore, we introduce a novel method for detecting network interference by combining the pseudo inverse estimator with the difference-in-means estimator, thus extending the pseudo inverse estimator to a broader range of application scenarios. Our real-world implementation of the pseudo inverse estimator showcases its potential for practical application.

We acknowledge three limitations of our study. Firstly, due to practical constraints, we focus on the pseudo inverse estimator with parameter $\beta=1$ in this article; however, it remains an open question to derive new results for other choices of $\beta$ under a similar framework. Secondly, our variance estimation may be biased, particularly when there is significant individual heterogeneity and a substantial deviation of the surrogate network from the actual interference network. Further research is needed to develop methods for compensating this bias under network interference. Lastly, constructing the surrogate network under the bias-variance trade-off, as discussed in Section \ref{section: estimator}, remains an unresolved issue. We defer this task to future research to more precisely construct a surrogate network that closely aligns with the actual interference network.


\bibliographystyle{elsarticle-harv}
\bibliography{references}  





\newpage\appendix
\renewcommand{\theequation}{A\arabic{equation}}
\renewcommand{\thelemma}{A.\arabic{lemma}}
\renewcommand{\theproposition}{A.\arabic{proposition}}
\setcounter{equation}{0}

\section{Proofs}

\subsection{Proof of Lemma \ref{lemma: expectation of estimator}}\label{appendix: proof of expectation}
\begin{proof}
    Let $D_i=\left(\frac{z_i}{p}-\frac{1-z_i}{1-p}\right)$, consider
    \begin{align*}
        E(Y_iD_k)=& pE(Y_iD_k|z_k=1)+(1-p)E(Y_iD_k|z_k=0)\\
        =&E(Y_i|z_k=1)-E(Y_i|z_k=0)\\
        =&E(f_i(\vec z_{-k},z_k=1)-f_i(\vec z_{-k},z_k=0))\\
        =&E(\psi_i^k(\vec z_{-k}))
    \end{align*}
Therefore
\[E(\hat{\tau}(\mathcal{G}))=\frac{1}{n}\sum_{i=1}^n\sum_{k\in\mathcal{M}_i}E(Y_iD_k)=\frac{1}{n}\sum_{i=1}^n\sum_{k\in\mathcal{M}_i}E(\psi_i^k(\vec z_{-\{k\}}))\]
\end{proof}

\subsection{Proof of Lemma \ref{lemma: exogenous bias}}\label{appendix: proof of lemma exogenous bias}
\begin{proof} Recalling the definition of TTE, we have
    \[\text{TTE}=\frac{1}{n}\sum_{i=1}^n(Y_i(\vec 1)-Y_i(\vec 0))=\frac{1}{n}\sum_{i=1}^n\sum_{j\in \mathcal{N}_i} w_{ij}.\]
Under the required assumptions, the result follows from
\begin{align*}
    E(\hat\tau_{\mathcal{G}})=&\frac{1}{n}\sum_{i=1}^n\sum_{k\in\mathcal{M}_i}E(\psi_i^k(\vec z_{-\{k\}}))\\
=&\frac{1}{n} \sum_{i=1}^n\sum_{k\in \mathcal{N}_i} w_{ik}\sum_{j\in \mathcal{M}_i}\mathbbm{1}\{k=j\}\\
 =&\frac{1}{n} \sum_{i=1}^n\sum_{k\in \mathcal{N}_i} w_{ik}\mathbbm{1}\{k\in\mathcal{M}_i\}\\
  =&\text{TTE}-\frac{1}{n} \sum_{i=1}^n\sum_{k=1}^n w_{ik}A_{ik}(1-G_{ik})\\
  \ge & (1-\delta)\text{TTE}
\end{align*}
The second equality follows from $E(\frac{z_i}{p}-\frac{1-z_i}{1-p})=0$, and the third follows from the uniform Bernoulli treatment assignment. The inequality follows from Assumption \ref{ass: gap between A, G}.
\end{proof}

\subsection{Proof of Lemma \ref{lemma: endogenous bias}}\label{appendix: proof of endogenous bias}
\begin{proof}
The result follows from TTE=$ \sum_{\beta=0}^{d_{\mathcal{A}}}\bar a_{\beta}$ and 
    \begin{align*}
        E(\hat\tau_{\mathcal{G}})=&\frac{1}{n}E\left( \sum_{i=1}^n\sum_{S\subseteq \mathcal{N}_i}a_{i,S}\prod_{k\in S}z_k\sum_{j\in \mathcal{N}_i}\left(\frac{z_j}{p}-\frac{1-z_j}{1-p}\right)\right)\\
        =&\frac{1}{n} \sum_{i=1}^n\sum_{S\subseteq \mathcal{N}_i}a_{i,S}\sum_{j\in \mathcal{N}_i}E\left(\prod_{k\in S}z_k\left(\frac{z_j}{p}-\frac{1-z_j}{1-p}\right)\right)\\
        =&\frac{1}{n} \sum_{i=1}^n\sum_{S\subseteq \mathcal{N}_i}a_{i,S}\sum_{j\in \mathcal{N}_i}\mathbbm{1}\{j\in S\}p^{|S|-1}\\
=&\frac{1}{n} \sum_{\beta=0}^{d_{\mathcal{A}}}\sum_{i=1}^n\sum_{S\subseteq \mathcal{N}_i: |S|=\beta}a_{i,S}\beta p^{\beta-1}\\
=& \sum_{\beta=0}^{d_{\mathcal{A}}}\beta p^{\beta-1}\bar a_{\beta}\\
    \end{align*}
\end{proof}

\subsection{Proof of Theorem \ref{theorem: variance upper bound}}\label{appendix: proof of variance upperbound}

\begin{proof}
For the brevity of notation, we use $\vec z_{-S}$ to represents the vector excluding entries in the set $S$. Recalling $D_i=\left(\frac{z_i}{p}-\frac{1-z_i}{1-p}\right)$ and $C_0$ be a sufficiently large universal constant that does not depend on $\mathcal{A}$ and $\mathcal{G}$. We use $\mathbbm{1}\{\cdot\}$ to denote a indicator function.

\begin{align*}
    \Var(\hat\tau(\mathcal{G}))=\Var\left(\frac{1}{n}\sum_{i=1}^n Y_i\sum_{j\in \mathcal{M}_i}D_j\right)=\frac{1}{n^2}\sum_{i=1}^n\sum_{j=1}^n \sum_{k\in \mathcal{M}_i}\sum_{l\in \mathcal{M}_j}\Cov(Y_iD_k,Y_jD_l)
\end{align*}

According to Proposition \ref{prop: Cov(Y_iD_k,Y_jD_k) bound}, $\Cov(Y_iD_k,Y_jD_k)\le \frac{C_1}{p(1-p)}$ for a fixed constant $C_1$. Hence
\begin{align*}
    \Var(\hat\tau(\mathcal{G}))\le \underbrace{\frac{C_1}{n^2p(1-p)}\sum_{i=1}^n\sum_{j=1}^n |\mathcal{M}_i\cap\mathcal{M}_j| }_{(\romannumeral1)}+\underbrace{\frac{1}{n^2}\sum_{i=1}^n\sum_{j=1}^n \sum_{k\in \mathcal{M}_i}\sum_{l\in \mathcal{M}_j}\Cov(Y_iD_k,Y_jD_l)\mathbbm{1}\{k\ne l\}}_{(\romannumeral2)}
\end{align*}

\paragraph{Bound (\romannumeral1):} Given that the surrogate network is undirected, we have
\begin{equation}\label{eq: M_i intersect M_j bound}
    \begin{split}
            &\sum_{i=1}^n\sum_{j=1}^n |\mathcal{M}_i\cap\mathcal{M}_j| \\
    =& \sum_{i=1}^n\sum_{j=1}^n\sum_{k=1}^n \mathbbm{1} \{k\in\mathcal{M}_i\}\mathbbm{1} \{k\in\mathcal{M}_j\}\\
=& \sum_{k=1}^n\sum_{i=1}^n\mathbbm{1} \{k\in\mathcal{M}_i\} \sum_{j=1}^n \mathbbm{1} \{k\in\mathcal{M}_j\}\\
=& \sum_{k=1}^n\sum_{i=1}^n\mathbbm{1} \{i\in\mathcal{M}_k\} \sum_{j=1}^n \mathbbm{1} \{j\in\mathcal{M}_k\}\\
=&\sum_{k=1}^n |\mathcal{M}_k|^2\\
\le & n d_{\mathcal{G}}^2
    \end{split}
\end{equation}

\paragraph{Bound (\romannumeral2):} To apply Proposition \ref{prop: Cov(Y_iD_k,Y_jD_l) bound}, we consider the following inequalities
\begin{align*}
    \sum_{i=1}^n\sum_{j=1}^n \sum_{k\in \mathcal{M}_i}\sum_{l\in \mathcal{M}_j}w_{jk}w_{il}\mathbbm{1}\{k\ne l\}
    \le C_0\sum_{i=1}^n\sum_{j=1}^n\sum_{l\in \mathcal{M}_j}w_{il}
    =C_0\sum_{i=1}^n\sum_{l=1}^nw_{il}|\mathcal{M}_l|
    \le nC_0'd_{\mathcal{G}}
\end{align*}
the first inequality is due to $\sum_{k\in \mathcal{M}_i}w_{jk}\mathbbm{1}\{k\ne l\}\le C_0$ $\forall l$. Similarly,
\begin{align*}
        \sum_{i=1}^n\sum_{j=1}^n \sum_{k\in \mathcal{M}_i}\sum_{l\in \mathcal{M}_j}w_{ik}w_{jk}\mathbbm{1}\{k\ne l\}
    \le d_{\mathcal{G}}\sum_{i=1}^n\sum_{j=1}^n\sum_{k\in \mathcal{M}_i}w_{ik}w_{jk}
    =d_{\mathcal{G}}\sum_{i=1}^n\sum_{k\in \mathcal{M}_i}w_{ik}\sum_{j=1}^nw_{jk}\le nC_0d_{\mathcal{G}}\\
            \sum_{i=1}^n\sum_{j=1}^n \sum_{k\in \mathcal{M}_i}\sum_{l\in \mathcal{M}_j}w_{ik}w_{jk}\mathbbm{1}\{k\ne l\}
    \le \sum_{i=1}^n\sum_{j=1}^n\sum_{k\in \mathcal{M}_i}w_{ik}w_{jk}|\mathcal{M}_j|
    =\sum_{j=1}^n|\mathcal{M}_j|\sum_{i=1}^n\sum_{k\in \mathcal{M}_i}w_{ik}w_{jk}\\
    = \sum_{j=1}^n|\mathcal{M}_j|\sum_{k=1}^n w_{jk}\sum_{i\in \mathcal{M}_k}w_{ik}\le B\bar Y\sum_{j=1}^n|\mathcal{M}_j|
\end{align*}
Next,
\begin{align*}
        &\sum_{i=1}^n\sum_{j=1}^n \sum_{k\in \mathcal{M}_i}\sum_{l\in \mathcal{M}_j}(w_{il}w_{ik}+w_{jl}w_{jk})\mathbbm{1}\{k\ne l\}\\
    \le &2\sum_{i=1}^n\sum_{k\in \mathcal{M}_i}w_{ik}\sum_{j=1}^n\sum_{l\in \mathcal{M}_j}w_{il}\\
    =&2\sum_{i=1}^n\sum_{k\in \mathcal{M}_i}w_{ik}\sum_{l=1}^n\sum_{j\in \mathcal{M}_l}w_{il}\\
    \le & 2\sum_{i=1}^n\sum_{k\in \mathcal{M}_i}w_{ik}\sum_{l=1}^nw_{il}|\mathcal{M}_l|\\
    \le & nC_0d_{\mathcal{G}}
\end{align*}

Finally, 
\begin{align*}
        &\sum_{i=1}^n\sum_{j=1}^n \sum_{k\in \mathcal{M}_i}\sum_{l\in \mathcal{M}_j}\sum_{k'\notin \{k,l\}}w_{ik'}w_{jk'}\mathbbm{1}\{k\ne l\}\\
    \le &d_{\mathcal{G}}^2\sum_{i=1}^n\sum_{j=1}^n \sum_{k'=1}^n w_{ik'}w_{jk'}\\
    =&d_{\mathcal{G}}^2\sum_{i=1}^n \sum_{k'=1}^n w_{ik'}\sum_{j=1}^nw_{jk'}\\
    \le & nC_0d_{\mathcal{G}}^2
\end{align*}
Combining the above results, we arrive at the variance upper bound.
\end{proof}

\begin{proposition}
   There exist a fixed constant $C_0$ such that
    \[|\Cov(Y_iD_k,Y_jD_l)|\le C_0(w_{jk}w_{il}+w_{il}w_{ik}+w_{jl}w_{jk}+\sum_{k'=1}^n w_{ik'}w_{jk'}).\]
    \label{prop: Cov(Y_iD_k,Y_jD_l) bound}
\end{proposition}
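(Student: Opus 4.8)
The plan is to fix $i,j,k,l$ with $k\neq l$ (this is the only case in which the proposition is invoked, since $k=l$ is handled by Proposition~\ref{prop: Cov(Y_iD_k,Y_jD_k) bound}) and to exploit that every function of the two binary coordinates $z_k,z_l$ is multilinear. Conditioning on $\vec z_{-\{k,l\}}$, I would write $Y_i=a_i+b_iz_k+c_iz_l+d_iz_kz_l$ and $Y_j=a_j+b_jz_k+c_jz_l+d_jz_kz_l$, where the coefficients are functions of $\vec z_{-\{k,l\}}$ alone. Matching them to discrete differences gives $b_i=\psi_i^k(\cdot,z_l=0)$, $c_i=\psi_i^l(\cdot,z_k=0)$ and $d_i=-\phi_i^{kl}$, and likewise for $j$, so Assumption~\ref{ass: potential outcomes} supplies $|b_i|\le C_0w_{ik}$, $|c_i|\le C_0w_{il}$, $|d_i|\le C_0w_{ik}w_{il}$ (and $|b_j|\le C_0w_{jk}$, $|c_j|\le C_0w_{jl}$, $|d_j|\le C_0w_{jk}w_{jl}$), while $|a_i|,|a_j|\le\bar Y$ by Assumption~\ref{ass: bounded outcomes}.

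Next I would record the moments of $(z_k,D_k)$ and $(z_l,D_l)$: since $z_k,z_l\in\{0,1\}$ and the two pairs are independent of each other and of $\vec z_{-\{k,l\}}$, one checks $E(D_k)=0$ and $E(z_kD_k)=1$, hence $E(z_k^az_l^bD_kD_l)=\mathbbm{1}\{a=1\}\mathbbm{1}\{b=1\}$ for $a,b\in\{0,1\}$. Substituting the multilinear forms into $E(Y_iY_jD_kD_l)$ and keeping only the monomials carrying \emph{both} a $z_k$ and a $z_l$ factor yields
\[E(Y_iD_kY_jD_l)=E_{-kl}\!\big[a_id_j+d_ia_j+b_ic_j+c_ib_j+b_id_j+d_ib_j+c_id_j+d_ic_j+d_id_j\big],\]
where $E_{-kl}$ denotes expectation over $\vec z_{-\{k,l\}}$. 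The same bookkeeping gives $E(Y_iD_k)=E_{-kl}[b_i+pd_i]$ and $E(Y_jD_l)=E_{-kl}[c_j+pd_j]$.

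Subtracting $E(Y_iD_k)E(Y_jD_l)$ then leaves exactly one $\vec z_{-\{k,l\}}$-covariance, $\Cov_{-kl}(b_i,c_j)=E_{-kl}[b_ic_j]-E_{-kl}[b_i]E_{-kl}[c_j]$, together with a collection of ``crude'' terms (the remaining surviving products, plus the three extra products $pE[b_i]E[d_j]$, $pE[d_i]E[c_j]$, $p^2E[d_i]E[d_j]$ coming from the factored expectation). Each crude term carries at least one $d$-factor or is the aligned product $c_ib_j$, so the triangle inequality, the coefficient bounds above, and the fact that every entry of $\boldsymbol W$ is at most $C_0$ send each of them into one of $w_{jk}w_{il}$, $w_{il}w_{ik}$, $w_{jl}w_{jk}$, or $\sum_{k'}w_{ik'}w_{jk'}$ (using $w_{ik}w_{jk}\le\sum_{k'}w_{ik'}w_{jk'}$ to absorb, e.g., the $b_id_j$ term). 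The term that resists this treatment is $b_ic_j$: its naive bound $C_0^2w_{ik}w_{jl}$ is \emph{not} among the four target terms, so genuine cancellation is needed.

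The main obstacle, therefore, is to bound $\Cov_{-kl}(b_i,c_j)$, and here I would use a Doob martingale decomposition over the independent coordinates of $\vec z_{-\{k,l\}}$. Revealing coordinates one at a time produces martingale differences $\Delta^{b_i}_{k'},\Delta^{c_j}_{k'}$ whose orthogonality gives $\Cov_{-kl}(b_i,c_j)=\sum_{k'}E(\Delta^{b_i}_{k'}\Delta^{c_j}_{k'})$, and each is controlled by the one-coordinate oscillation, $\|\Delta^{b_i}_{k'}\|_\infty\le\|\Delta_{k'}b_i\|_\infty$. Since flipping $z_{k'}$ turns $b_i=\psi_i^k(\cdot,z_l=0)$ into a second difference of $Y_i$ in coordinates $k,k'$, Assumption~\ref{ass: potential outcomes} gives $\|\Delta_{k'}b_i\|_\infty\le C_0w_{ik}w_{ik'}$ and likewise $\|\Delta_{k'}c_j\|_\infty\le C_0w_{jl}w_{jk'}$; bounding the product of each pair of differences by the product of their sup-norms and summing over $k'$ yields $|\Cov_{-kl}(b_i,c_j)|\le C_0^2w_{ik}w_{jl}\sum_{k'}w_{ik'}w_{jk'}\le C_0^4\sum_{k'}w_{ik'}w_{jk'}$, the final target term. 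Collecting this with the crude bounds and enlarging $C_0$ once more gives the stated inequality.
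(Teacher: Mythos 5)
Your proof is correct, but it takes a genuinely different route from the paper's. The paper splits $\Cov(Y_iD_k,Y_jD_l)$ by the triangle inequality around the pivot $E(Y_iD_k\mid z_l=0)\,E(Y_jD_l\mid z_k=0)$ and then grinds through conditional expectations using the identity $ab-cd=(a-c)(b-d)+(b-d)c+(a-c)d$; you instead expand $Y_i$ and $Y_j$ as multilinear polynomials in $(z_k,z_l)$ with $\vec z_{-\{k,l\}}$-measurable coefficients $(a,b,c,d)$ and compute every expectation directly from the moments $E(D_k)=0$, $E(z_kD_k)=1$, which makes the bookkeeping fully explicit (nine surviving monomials plus three factored-product terms). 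Both arguments funnel into the same hard term --- the covariance, over $\vec z_{-\{k,l\}}$, of the first differences $\psi_i^k(\cdot,z_l{=}0)$ and $\psi_j^l(\cdot,z_k{=}0)$ (your $\Cov_{-kl}(b_i,c_j)$) --- and both feed it the same assumption input, namely the second-difference bound $|\phi_i^{kk'}|\le C_0 w_{ik}w_{ik'}$ from Assumption \ref{ass: potential outcomes}; but the paper controls it via Newman's association inequality (Lemma \ref{lemma: associated r.v.}), dominating $\psi_i^k$ by the monotone linear function $C_0w_{ik}\lambda_i^{kl}$ and computing the covariance of the two linear surrogates exactly, whereas you use Doob-martingale orthogonality $\Cov_{-kl}(b_i,c_j)=\sum_{k'}E(\Delta^{b_i}_{k'}\Delta^{c_j}_{k'})$ together with the bounded-differences estimate on each increment. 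Your route is more elementary and self-contained (independence plus standard martingale facts, no imported association lemma), and it does not require exhibiting a monotone dominating function, which could matter if the assumption were weakened to a non-monotone-compatible form; the paper's route is shorter once the association lemma is granted, and its monotone-domination device is the one reused verbatim in Lemma \ref{lemma: appendix var decompose}. Your restriction to $k\ne l$ is also the right reading: the $k=l$ case is covered separately by Proposition \ref{prop: Cov(Y_iD_k,Y_jD_k) bound}, and the stated bound would in fact fail for $k=l$ (e.g.\ constant outcomes with $\boldsymbol{W}=0$), so the paper's proof implicitly assumes $k\ne l$ exactly as you do.
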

\begin{proof}
We rely on the following inequality
\begin{align*}
    &|\Cov(Y_iD_k,Y_jD_l)|\\
    =&|E(Y_iD_kY_jD_l)-E(Y_iD_k)E(Y_jD_l)|\\
    \le & |E(Y_iD_kY_jD_l)-E(Y_iD_k|z_l=0)E(Y_jD_l|z_k=0)|+|E(Y_iD_k|z_l=0)E(Y_jD_l|z_k=0)-E(Y_iD_k)E(Y_jD_l)|
\end{align*}

The proof takes two steps to bound each term in the right-hand side of the above inequality. The result follows from combining two bounds together. For notation brevity, we omit the $\vec z_{-\{k,l\}}$ parameter in both $f_i$ and $\psi_i^j$. In other words, we define $f_i(z_k,z_l)=f_i(\vec z_{-\{k,l\}}, z_k,z_l)$ and $\psi_i^j(z_k,z_l)=\psi_i^j(\vec z_{-\{k,l\}}, z_k,z_l)$ for all $i$ and $j$.

\paragraph{Step 1.} Bound $|E(Y_iD_kY_jD_l)-E(Y_iD_k|z_l=0)E(Y_jD_l|z_k=0)|$. 
\begin{equation}
\begin{split}
&E[Y_iD_kY_jD_l]\\
=& E[Y_iY_j|z_k=1,z_l=1]-E[Y_iY_j|z_k=0,z_l=1]\\
&-E[Y_iY_j|z_k=1,z_l=0]+E[Y_iY_j|z_k=0,z_l=0]\\
=&E[f_i( z_k=1,z_l=1)f_j( z_k=1,z_l=1)]\\
&-E[f_i( z_k=0,z_l=1)f_j( z_k=0,z_l=1)]\\
&-E[f_i( z_k=1,z_l=0)f_j( z_k=1,z_l=0)]\\
&+E[f_i( z_k=0,z_l=0)f_j( z_k=0,z_l=0)]
\label{eq: E[Y_iD_kY_jD_l] decompose}
\end{split}
\end{equation}
The first equality is due to the law of total expectation and the second equality is due to Assumption \ref{ass: potential outcomes} and the fact that $ \vec z_{-\{k,l\}}$, $z_k$, $z_l$ are independent. The following equation for real values $a$, $b$, $c$ and $d$ will be used in the subsequent analysis.
\begin{align}
    ab-cd=(a-c)(b-d)+(b-d)c+(a-c)d \label{eq: 4 value diff}
\end{align}

Recalling the definition of $\psi_i^k$ in Assumption \ref{ass: potential outcomes}, use the above equation,
\begin{equation}\label{eq: f_if_j-f_if_j inequality}
\begin{split}
&f_i( z_k=1,z_l=1)f_j( z_k=1,z_l=1)-f_i( z_k=0,z_l=1)f_j( z_k=0,z_l=1)\\
=&\psi_i^k(z_l=1)f_j( z_k=0,z_l=1)+\psi_j^k(z_l=1)f_i( z_k=0,z_l=1)+\psi_i^k(z_l=1)\psi_j^k(z_l=1)\\
\le &\psi_i^k(z_l=0)f_j( z_k=0,z_l=1)+
\psi_j^k(z_l=0)f_i( z_k=0,z_l=1)\\
&+\psi_i^k(z_l=1)\psi_j^k(z_l=1)-C_0( w_{ik}w_{il}+w_{jk}w_{jl})
\end{split}
\end{equation}
in which the inequality is due to Assumption \ref{ass: bounded outcomes} and \ref{ass: potential outcomes}.

Similarly,
\begin{equation}
\begin{split}
    &f_i( z_k=1,z_l=0)f_j( z_k=1,z_l=0)-f_i( z_k=0,z_l=0)f_j( z_k=0,z_l=0)\\
=&\psi_i^k(z_l=0)f_j( z_k=0,z_l=0)
+\psi_j^k(z_l=0)f_i( z_k=0,z_l=0)+\psi_i^k(z_l=0)\psi_j^k(z_l=0)
\label{eq: f_if_j-f_if_j equality}
\end{split}
\end{equation}

Combine (\ref{eq: f_if_j-f_if_j inequality}) and (\ref{eq: f_if_j-f_if_j equality}) together, we get
\begin{align*}
&f_i( z_k=1,z_l=1)f_j( z_k=1,z_l=1)-f_i( z_k=0,z_l=1)f_j( z_k=0,z_l=1)\\
&-f_i( z_k=1,z_l=0)f_j( z_k=1,z_l=0)
+f_i( z_k=0,z_l=0)f_j( z_k=0,z_l=0)\\
\le &\psi_i^k(z_l=0)\psi_j^l(z_k=0)
+\psi_j^k(z_l=0)\psi_i^l(z_k=0)\\
&+\psi_i^k(z_l=1)\psi_j^k(z_l=1)
-\psi_i^k(z_l=0)\psi_j^k(z_l=0)-C_0( w_{ik}w_{il}+w_{jk}w_{jl})\\
\le &\psi_i^k(z_l=0)\psi_j^l(z_k=0)
+ C_0^2w_{jk}w_{il}+C_0^2w_{ik}w_{jk}+C_0^2w_{ik}w_{jk}+C_0( w_{ik}w_{il}+w_{jk}w_{jl})\\
=&\psi_i^k(z_l=0)\psi_j^l(z_k=0)
+ C_0^2(w_{jk}w_{il}+w_{ik}w_{jk}+w_{ik}w_{il}+w_{jk}w_{jl})
\end{align*}

Substitute above inequation in to (\ref{eq: E[Y_iD_kY_jD_l] decompose}), we get
\begin{equation}
\begin{split}
E[Y_iD_kY_jD_l]
\le&E[\psi_i^k(z_l=0)\psi_j^l(z_k=0)]+ C_0^2(w_{jk}w_{il}+w_{ik}w_{jk}+w_{ik}w_{il}+w_{jk}w_{jl})
\label{eq: E[Y_iD_kY_jD_l] bound}
\end{split}
\end{equation}

Notice that
\begin{equation}
    \begin{split}
        E(Y_iD_k|z_l=0)=&E[Y_i|z_k=1,z_l=0]-E[Y_i|z_k=0,z_l=0]\\
        =&E(f_i( z_k=1,z_l=0)-f_i( z_k=0,z_l=0))\\
        =& E(\psi_i^k(z_l=0))
    \end{split}
\end{equation}
Analogously, $E(Y_jD_l|z_k=0)=E(\psi_j^l(z_k=0))$. Then
\begin{align*}
    &E(Y_iD_kY_jD_l)-E(Y_iD_k|z_l=0)E(Y_jD_l|z_k=0)\\
    \le&\Cov(\psi_i^k(z_l=0),\psi_j^l(z_k=0))+C_0^2(w_{jk}w_{il}+w_{ik}w_{jk}+w_{ik}w_{il}+w_{jk}w_{jl})
\end{align*}

We will use Lemma \ref{lemma: associated r.v.} to bound $\Cov(\psi_i^k(z_l=0),\psi_j^l(z_k=0))$. 
Since each coordinate in $\vec z_{-\{k,l\}}$ is independent, $\vec z_{-\{k,l\}}$ is an associated random vector. Also, let $\lambda_i^{kl}=\sum_{j\in\mathcal{N}_i\backslash \{k,l\}}w_{ij}z_j$. Since $|\psi_i^k(z_l=0,z_j=1)-\psi_i^k(z_l=0,z_j=0)|\le C_0w_{ik}w_{ij}$ for all $j\in\mathcal{N}_i\backslash \{k,l\}$, we have $C_0w_{ik}\lambda_i^{kl}\pm\psi_i^k(z_l=0)$ non-decreasing with respect to each argument of $\vec z_{-\{k,l\}}$ (i.e. $\psi_i^k(z_l=0)\ll C_0w_{ik}\lambda_i^{kl}$). Analogously, $\psi_j^l(z_k=0)\ll C_0w_{jl}\lambda_j^{kl}$. Then
\begin{align*}
    \Cov(\psi_i^k(z_l=0),\psi_j^l(z_k=0))\le C_0^2w_{ik}w_{jl}\Cov(\lambda_i^{kl},\lambda_j^{kl})\le C_0w_{ik}w_{jl}\sum_{k'\in \mathcal{N}_i\cap\mathcal{N}_j\backslash \{k,l\}}w_{ik'}w_{jk'}
\end{align*}

\paragraph{Step 2.} 
By the law of total expectation,
\begin{equation}
    \begin{split}
        E(Y_iD_k)=&pE(Y_iD_k|z_l=1)+(1-p)E(Y_iD_k|z_l=0)  \\
=&p\left(E(Y_i|z_k=1,z_l=1)-E(Y_i|z_k=0,z_l=1)\right)\\
&+(1-p)\left(E(Y_i|z_k=1,z_l=0)-E(Y_i|z_k=0,z_l=0)\right)\\
=& pE(\psi_i^k(z_l=1))+(1-p)E(\psi_i^k(z_l=0))\\
E(Y_iD_k|z_l=0)=&E(Y_i|z_k=1,z_l=0)-E(Y_i|z_k=0,z_l=0)=E(\psi_i^k(z_l=0))
    \end{split}
    \label{eq: E(Y_iD_k) and E(Y_iD_k|z_l=0) formula}
\end{equation}

Thus,
\begin{equation}\label{eq: integral diff of psi}
    \begin{split}
        &|E(Y_iD_k)-E(Y_iD_k|z_l=0)|=\left|p\left(\psi_i^k(z_l=1)-\psi_i^k(z_l=0) \right)\right|\le pC_0w_{il}w_{ik}
    \end{split}
\end{equation}

Therefore,
\begin{align*}
&|E(Y_iD_k)E(Y_jD_l)-E(Y_iD_k|z_l=0)E(Y_jD_l|z_k=0)|\\
\le & |E(Y_iD_k)-E(Y_iD_k|z_l=0)||E(Y_jD_l)-E(Y_jD_l|z_k=0)|\\
&+ |E(Y_iD_k|z_l=0)||E(Y_jD_l)-E(Y_jD_l|z_k=0)|\\
&+|E(Y_jD_l|z_k=0)||E(Y_iD_k)-E(Y_iD_k|z_l=0)|\\
\le & p^2C_0^2w_{il}w_{ik} w_{jl}w_{jk} +pL\left(\bar Y p^{-1} w_{jl}w_{jk}+\bar Y p^{-1} w_{il}w_{ik} \right)\\
\le& C_0(w_{il}w_{ik}+w_{jl}w_{jk})
\end{align*}

\end{proof}

\begin{proposition}
    $\Cov(Y_iD_k,Y_jD_k)\le \frac{C_1}{p(1-p)}$, for all $i$, $j$, $k$ and $l$, where $C_1$ is a fixed constant.
    \label{prop: Cov(Y_iD_k,Y_jD_k) bound}
\end{proposition}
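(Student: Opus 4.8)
The plan is to exploit the fact that $Y_iD_k$ and $Y_jD_k$ share the \emph{same} factor $D_k$, which makes this ``diagonal'' case ($k=l$) considerably simpler than Proposition \ref{prop: Cov(Y_iD_k,Y_jD_l) bound}. First I would expand
\[
\Cov(Y_iD_k,Y_jD_k)=E(Y_iY_jD_k^2)-E(Y_iD_k)E(Y_jD_k),
\]
and bound the two terms on the right-hand side separately.

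For the first term, observe that $D_k$ is a deterministic function of $z_k$ alone: $D_k=1/p$ when $z_k=1$ and $D_k=-1/(1-p)$ when $z_k=0$, so $D_k^2$ equals $1/p^2$ with probability $p$ and $1/(1-p)^2$ with probability $1-p$. Conditioning on $z_k$ via the law of total expectation gives
\[
E(Y_iY_jD_k^2)=\frac{1}{p}E(Y_iY_j\mid z_k=1)+\frac{1}{1-p}E(Y_iY_j\mid z_k=0).
\]
Assumption \ref{ass: bounded outcomes} yields $|Y_iY_j|\le \bar Y^2$, so each conditional expectation is at most $\bar Y^2$ in absolute value, whence $|E(Y_iY_jD_k^2)|\le \bar Y^2\bigl(\tfrac1p+\tfrac1{1-p}\bigr)=\bar Y^2/\big(p(1-p)\big)$.

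For the second term, the computation in the proof of Lemma \ref{lemma: expectation of estimator} gives $E(Y_iD_k)=E(\psi_i^k(\vec z_{-k}))$, and Assumption \ref{ass: potential outcomes} gives $|\psi_i^k|\le C_0 w_{ik}\le C_0^2$, where the last step uses $\|\boldsymbol{W}\|_{\infty}\le C_0$ so that every entry satisfies $w_{ik}\le C_0$. Hence $|E(Y_iD_k)E(Y_jD_k)|\le C_0^4=O(1)$. Combining the two bounds and using $p(1-p)\le 1/4$ to absorb the constant product term, $C_0^4\le \big(C_0^4/4\big)/\big(p(1-p)\big)$, yields
\[
|\Cov(Y_iD_k,Y_jD_k)|\le \frac{\bar Y^2+C_0^4/4}{p(1-p)}=\frac{C_1}{p(1-p)},
\]
with $C_1=\bar Y^2+C_0^4/4$ a fixed constant.

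There is no serious obstacle here; the only point requiring care is recognizing that the shared factor collapses the product into $D_k^2$, which depends solely on $z_k$ and is therefore trivially bounded after conditioning. This is precisely what allows the $k=l$ case to escape the delicate association and total-variation arguments needed for the off-diagonal bound in Proposition \ref{prop: Cov(Y_iD_k,Y_jD_l) bound}.
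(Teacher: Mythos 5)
Your proof is correct and follows essentially the same route as the paper's: expand the covariance, bound $E(Y_iY_jD_k^2)$ by $\bar Y^2/\big(p(1-p)\big)$ via conditioning on $z_k$ and Assumption \ref{ass: bounded outcomes}, and bound the product $E(Y_iD_k)E(Y_jD_k)$ by a constant using the identity $E(Y_iD_k)=E(\psi_i^k(\vec z_{-k}))$ together with Assumption \ref{ass: potential outcomes}. If anything, your write-up is slightly cleaner: you use the correct sign in the covariance expansion (the paper writes ``$+E(Y_iD_k)E(Y_jD_k)$'' where a minus belongs, though this is immaterial to the bound) and you explicitly absorb the additive constant into $C_1/\big(p(1-p)\big)$ using $p(1-p)\le 1/4$, a step the paper leaves implicit.
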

\begin{proof}
    \begin{align*}
        &\Cov(Y_iD_k,Y_jD_k)\\
        =&p^{-1}E(Y_iY_j|z_k=1)+(1-p)^{-1}E(Y_iY_j|z_k=0)+E(Y_iD_k)E(Y_j D_k)\\
        \le &\frac{\bar Y^2}{p(1-p)}+E(Y_iD_k)E(Y_j D_k)\\
        \le & \frac{\bar Y^2}{p(1-p)}+C_0
    \end{align*}
    where the second inequality is due to (\ref{eq: E(Y_iD_k) and E(Y_iD_k|z_l=0) formula}), and $C_0$ is a fixed constant.
\end{proof}

\subsection{Proof of Theorem \ref{theorem: variance lower bound}}\label{appendix: proof of var lower bound}
\begin{proof}
Recalling $D_i=\left(\frac{z_i}{p}-\frac{1-z_i}{1-p}\right)$, we have
\begin{align*}
    \Var(\hat\tau(\mathcal{G}))=&\Var\left(\frac{C_0}{n}\sum_{i=1}^n \sum_{j\in \mathcal{M}_i}D_j\right)\\
    =&\frac{C_0^2}{n^2}\sum_{i=1}^n\sum_{j=1}^n \sum_{k\in \mathcal{M}_i}\sum_{l\in \mathcal{M}_j}\Cov(D_k,D_l)\\
    =&\frac{C_0^2}{n^2}\sum_{i=1}^n\sum_{j=1}^n \sum_{k\in \mathcal{M}_i\cap\mathcal{M}_j}\Var(D_k)\\
    =&\frac{C_0^2}{n^2p(1-p)}\sum_{i=1}^n\sum_{j=1}^n |\mathcal{M}_i\cap\mathcal{M}_j|\\
    =&\frac{C_0^2}{np(1-p)}d_{\mathcal{G}}^2.
\end{align*}
The final equation follows from (\ref{eq: M_i intersect M_j bound}) in Appendix \ref{appendix: proof of variance upperbound} and the assumption that $|\mathcal{M}_i|=d_{\mathcal{G}}$ $\forall i$.
\end{proof}

\subsection{Proof of Theorem \ref{theorem: var estimator bias}}\label{appendix: proof of var estimator endogenous bias}
\begin{proof}\textbf{Step 1. }
    By the definition of $ T_i$ and $I_{ij}$, we have $ T_i\le \frac{\bar Y d_{\mathcal{G}}}{p(1-p)}$ and $\sum_{j=1}^n I_{ij}\le d_{\mathcal{G}}^2$. Recalling Theorem \ref{theorem: variance upper bound}, we have 
    \begin{equation}
        \hat\tau(\mathcal{G})-E(\hat\tau(\mathcal{G}))=\frac{1}{n}\sum_{i=1}^n\left( T_i-E( T_i)\right)=O_p\left(\frac{d_{\mathcal{G}}}{\sqrt{np(1-p)}}\right)
    \end{equation}
    Lemma \ref{lemma: appendix var decompose} tells
        \begin{equation}
        \frac{1}{n}\sum_{i=1}^n\left(\tilde T_i-E(\tilde T_i)\right)=O_p\left(\frac{d_{\mathcal{G}}}{\sqrt{np(1-p)}}\right)
    \end{equation}
    Then we can write
    \begin{align*}
        n\hat\sigma_{\mathcal{G}}^2\backslash d_{\mathcal{G}}^2=& \frac{1}{nd_{\mathcal{G}}^2}\sum_{i=1}^n\sum_{j=1}^n [T_i-\hat\tau(\mathcal{G})][T_j-\hat\tau(\mathcal{G})]I_{ij}\\
        =& \frac{1}{nd_{\mathcal{G}}^2}\sum_{i=1}^n\sum_{j=1}^n [ T_i-E(\hat\tau(\mathcal{G}))][ T_j-E(\hat\tau(\mathcal{G}))]I_{ij}\\
        &+\frac{1}{nd_{\mathcal{G}}^2}[E(\hat\tau(\mathcal{G}))-\hat\tau(\mathcal{G})]\sum_{i=1}^n\sum_{j=1}^n [ T_i+ T_j-2E(\hat\tau(\mathcal{G}))]I_{ij}\\
        &+\frac{1}{nd_{\mathcal{G}}^2}[E(\hat\tau(\mathcal{G}))-\hat\tau(\mathcal{G})]^2\sum_{i=1}^n\sum_{j=1}^n I_{ij}\\
        =& \frac{1}{nd_{\mathcal{G}}^2}\sum_{i=1}^n\sum_{j=1}^n [ T_i-E(\hat\tau(\mathcal{G}))][ T_j-E(\hat\tau(\mathcal{G}))]I_{ij}+O_p\left(\frac{d_{\mathcal{G}}^2}{n^{0.5}p^{1.5}(1-p)^{1.5}}\right)
    \end{align*}
\textbf{Step 2. } We next bound the first term in the right hand side of above equation.
        \begin{align*}
        &\frac{1}{nd_{\mathcal{G}}^2}\sum_{i=1}^n\sum_{j=1}^n [ T_i-E(\hat\tau(\mathcal{G}))][ T_j-E(\hat\tau(\mathcal{G}))]I_{ij}\\
        =& \frac{1}{nd_{\mathcal{G}}^2}\sum_{i=1}^n\sum_{j=1}^n [ T_i-E( T_i)][T_j-E( T_j)]I_{ij}\\
        &+\frac{2}{nd_{\mathcal{G}}^2}\sum_{i=1}^n\sum_{j=1}^n [ T_i-E( T_i)][E( T_j)-E(\hat\tau(\mathcal{G}))]I_{ij}\\
        &+\frac{1}{nd_{\mathcal{G}}^2}\sum_{i=1}^n\sum_{j=1}^n [E( T_i)-E(\hat\tau(\mathcal{G}))][E( T_j)-E(\hat\tau(\mathcal{G}))]I_{ij}\tag{$\mathcal{R}_{\mathcal{G}}$}
    \end{align*}
Let $\omega_i=\sum_{j=1}^n[E( T_j)-E(\hat\tau(\mathcal{G}))]I_{ij}$, then by (), $|\omega_i|\le C_0 d_{\mathcal{G}}^2$. Since
\begin{equation}\label{eq: proof of lemma 1 step 2}
    \begin{split}
            &E\left(\left|\frac{1}{n}\sum_{i=1}^n\sum_{j=1}^n [ T_i-E( T_i)][E( T_j)-E(\hat\tau(\mathcal{G}))]I_{ij} \right|\right)\\
    \le & E\left(\left|\frac{1}{n^2}\sum_{i=1}^n [ T_i-E( T_i)]\omega_i \right|^2\right)^{0.5}\\
    =& \left(\frac{1}{n^2}\sum_{i=1}^n\sum_{j=1}^n \omega_i\omega_j\Cov( T_i, T_j)\right)^{0.5}\\
    =& O\left(\frac{d_{\mathcal{G}}^3}{\sqrt{np(1-p)}} \right)
    \end{split}
\end{equation}
where the last equality follows from Appendix \ref{appendix: proof of variance upperbound}. This implies 
\[\frac{2}{nd_{\mathcal{G}}^2}\sum_{i=1}^n\sum_{j=1}^n [ T_i-E( T_i)][E( T_j)-E(\hat\tau(\mathcal{G}))]I_{ij}=O_p\left(\frac{d_{\mathcal{G}}}{\sqrt{np(1-p)}}\right)\]

\textbf{Step 3. } We next bound the following difference
\begin{align*}
       & \frac{1}{nd_{\mathcal{G}}^2}\sum_{i=1}^n\sum_{j=1}^n [ T_i-E( T_i)][ T_j-E( T_j)]I_{ij}-\frac{1}{nd_{\mathcal{G}}^2}\sum_{i=1}^n\sum_{j=1}^n\Cov(\tilde T_i,\tilde T_j)I_{ij}\\
        =& \frac{1}{nd_{\mathcal{G}}^2}\sum_{i=1}^n\sum_{j=1}^n [T_i T_j-E(\tilde T_i\tilde T_j)]I_{ij}+\frac{2}{nd_{\mathcal{G}}^2}\sum_{i=1}^n\sum_{j=1}^n [E(\tilde T_i)E(\tilde T_j)- T_iE( T_j)]I_{ij}\\
        =& \underbrace{\frac{1}{nd_{\mathcal{G}}^2}\sum_{i=1}^n\sum_{j=1}^n [ T_i T_j-E(\tilde T_i\tilde T_j)]I_{ij}}_{(\romannumeral1)}+\underbrace{\frac{2}{nd_{\mathcal{G}}^2}\sum_{i=1}^n [E( T_i)- T_i]\sum_{j=1}^n E( T_j)I_{ij}}_{(\romannumeral2)}
\end{align*}
Analogous to (\ref{eq: proof of lemma 1 step 2}), we have
\[(\romannumeral2)=O\left(\frac{d_{\mathcal{G}}}{\sqrt{np(1-p)}} \right)\]

\begin{align*}
    (\romannumeral1)=&\underbrace{\frac{1}{nd_{\mathcal{G}}^2}\sum_{i=1}^n\sum_{j=1}^n [\tilde T_i\tilde T_j-E(\tilde T_i\tilde T_j)]I_{ij}}_{(\romannumeral3)}
    +\underbrace{\frac{1}{nd_{\mathcal{G}}^2}\sum_{i=1}^n\sum_{j=1}^n [ T_i-\tilde T_i][  T_j-\tilde T_j]I_{ij}}_{(\romannumeral4)}
    +\underbrace{\frac{2}{nd_{\mathcal{G}}^2}\sum_{i=1}^n[ T_i-\tilde T_i]\sum_{j=1}^n \tilde T_jI_{ij}}_{(\romannumeral5)}
\end{align*}

The term (\romannumeral4) can be bounded in probability by
\begin{align*}
    E[|(\romannumeral4)|]\le &\frac{1}{nd_{\mathcal{G}}^2}\sum_{i=1}^n\sum_{j=1}^n | \Cov(T_i-\tilde T_i, T_j-\tilde T_j)I_{ij}|\\
    \le & \frac{1}{nd_{\mathcal{G}}^2}\sum_{i=1}^n\sum_{j=1}^n\sum_{k\in\mathcal{M}_i}\sum_{l\in\mathcal{M}_j}|\Cov(T_{ik}-\tilde T_{ik},T_{jl}-\tilde T_{jl})|\\
    =& O\left(\frac{\delta^2}{p(1-p)}\right)
\end{align*}
where the last equality can be obtained using the same procedure in Lemma \ref{lemma: appendix var decompose}.

Let $\tilde \omega_i=\sum_{j=1}^n \tilde T_jI_{ij}$, then $|\tilde \omega_i|\le C_0 d_{\mathcal{G}}^3$ . Similarly,
\begin{align*}
    E[|(\romannumeral5)|]=&\frac{2}{d_{\mathcal{G}}^2}E\left(\frac{1}{n}\sum_{i=1}^n[ T_i-\tilde T_i]\tilde \omega_i\right)\\
    \le & \frac{2}{d_{\mathcal{G}}^2}E\left(\left|\frac{1}{n}\sum_{i=1}^n[ T_i-\tilde T_i]\tilde \omega_i\right|^2\right)^{0.5}\\
    =&\frac{2}{d_{\mathcal{G}}^2}\left(\frac{1}{n^2}\sum_{i=1}^n\sum_{j=1}^n\Cov(T_i-\tilde T_i,T_j-\tilde T_j)\tilde \omega_i\tilde \omega_j\right)^{0.5}\\
    =&O\left(\frac{\delta d_{\mathcal{G}}^2}{\sqrt{np(1-p)}}\right)
\end{align*}

Finally, since $\tilde T_i$ and $\tilde T_j$ are independent if $I_{ij}=0$ for all $i$ and $j$, we have
$\Cov(\tilde T_i\tilde T_jI_{ij},\tilde T_k\tilde T_lI_{kl})=0$ when $(1-I_{ik})(1-I_{jk})(1-I_{il})(1-I_{jl})=1$. Also, $|\Cov(\tilde T_i\tilde T_jI_{ij},\tilde T_k\tilde T_lI_{kl})|\le C_0 \left(\frac{d_{\mathcal{G}}}{p(1-p)}\right)^4$. Thus we have
\begin{align*}
    \Var(\romannumeral3)=&\frac{1}{n^2d_{\mathcal{G}}^4}\sum_{i=1}^n\sum_{j=1}^n\sum_{k=1}^n\sum_{l=1}^n\Cov(\tilde T_i\tilde T_jI_{ij},\tilde T_k\tilde T_lI_{kl})\\
    \le &\frac{1}{n^2p^4(1-p)^4} \sum_{i=1}^n\sum_{j=1}^n\sum_{k=1}^n\sum_{l=1}^n I_{ij}I_{kl}(I_{ik}+I_{jk}+I_{il}+I_{jl})\\
    = &\frac{4}{n^2p^4(1-p)^4} \sum_{i=1}^n\sum_{j=1}^n\sum_{k=1}^n\sum_{l=1}^n I_{ij}I_{kl}I_{ik}\\
    = &\frac{4}{n^2p^4(1-p)^4} \sum_{i=1}^n\sum_{j=1}^nI_{ij}\sum_{k=1}^nI_{ik}\sum_{l=1}^n I_{kl}\\
    = & O\left(\frac{d_{\mathcal{G}}^6}{np^4(1-p)^4}\right)
\end{align*}
which means
\begin{align*}
    (\romannumeral3)=O_p\left(\frac{d_{\mathcal{G}}^3}{\sqrt{n}p^2(1-p)^2}\right)
\end{align*}
\textbf{Step 4. } Finally, we use Lemma \ref{lemma: appendix var decompose} to bound
\[\left|\frac{1}{n^2}\sum_{i=1}^n\sum_{j=1}^n\Cov(\tilde T_i,\tilde T_j)I_{ij}-\Var(\hat\tau(\mathcal{G}))\right|=O\left(\frac{\delta d_{\mathcal{G}}^2}{np(1-p)}\right)\]
Combine each bound in Step 1 to 3, the result follows.
\end{proof}

\begin{lemma}\label{lemma: appendix var decompose}
 \[\frac{1}{n^2}\sum_{i=1}^n\sum_{j=1}^n\sum_{k\in\mathcal{M}_i}\sum_{l\in\mathcal{M}_j}\Cov(\tilde T_{ik},\tilde T_{jl})=\Var(\hat\tau(\mathcal{G}))+O\left(\frac{\delta d_{\mathcal{G}}^2}{np(1-p)}\right)\]
\end{lemma}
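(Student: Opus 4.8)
The plan is to compare the tilde-quantities, built from the surrogate-truncated outcome $\tilde Y_i$ (the value of $f_i$ with the treatments of the missing neighbors $\mathcal{N}_i\setminus\mathcal{M}_i$ held fixed, so that $\tilde T_{ik}=\tilde Y_iD_k$ depends only on $\{z_m:m\in\mathcal{M}_i\}$), against the untruncated ones. Setting $\Delta_i=Y_i-\tilde Y_i$ and $R_{ik}=T_{ik}-\tilde T_{ik}=\Delta_iD_k$, and recalling from the first line of the proof of Theorem \ref{theorem: variance upper bound} that $\Var(\hat\tau(\mathcal{G}))=\frac{1}{n^2}\sum_{i,j}\sum_{k\in\mathcal{M}_i}\sum_{l\in\mathcal{M}_j}\Cov(T_{ik},T_{jl})$, bilinearity turns the left-hand side minus $\Var(\hat\tau(\mathcal{G}))$ into
\[\frac{1}{n^2}\sum_{i,j}\sum_{k\in\mathcal{M}_i}\sum_{l\in\mathcal{M}_j}\bigl(-2\Cov(R_{ik},T_{jl})+\Cov(R_{ik},R_{jl})\bigr),\]
where the $i\leftrightarrow j,\,k\leftrightarrow l$ symmetry has merged the two mixed terms. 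It then suffices to show the mixed sum is $O(\delta d_{\mathcal{G}}^2/(np(1-p)))$ and the quadratic sum is $O(\delta^2 d_{\mathcal{G}}^2/(np(1-p)))$.

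The engine of the argument is a smoothness estimate for $\Delta_i$. First I would telescope single-coordinate flips over $m\in\mathcal{N}_i\setminus\mathcal{M}_i$ and combine Assumption \ref{ass: potential outcomes} with Assumption \ref{ass: gap between A, G} to obtain $|\Delta_i|\le C_0\sum_{m\in\mathcal{N}_i\setminus\mathcal{M}_i}w_{im}\le C_0^2\delta$. Next I would control the marginal effect of $z_k$ on $\Delta_i$: for a missing direction $k\in\mathcal{N}_i\setminus\mathcal{M}_i$ it coincides with $\psi_i^k$ and is $\le C_0w_{ik}$, whereas for $k\in\mathcal{M}_i$ it is the gap between $\psi_i^k$ evaluated with and without the missing coordinates fixed, which the second-order bound of Assumption \ref{ass: potential outcomes} makes $\le C_0w_{ik}\sum_{m\in\mathcal{N}_i\setminus\mathcal{M}_i}w_{im}\le C_0^2\delta w_{ik}$. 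Summing either case over all directions gives the crucial fact that the total $\ell_1$ sensitivity of $\Delta_i$ across directions is $O(\delta)$; i.e. $\Delta_i$ satisfies the same smoothness hypotheses as $Y_i$ but with an effective weight row whose mass is $O(\delta)$.

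With these bounds I would re-run the two appendix covariance estimates with $Y_i$ replaced by $\Delta_i$ in the first factor. For the diagonal part $k=l$ I would mimic Proposition \ref{prop: Cov(Y_iD_k,Y_jD_k) bound}: since $|\Delta_i|=O(\delta)$, the conditional expectations $E(\Delta_iY_j\mid z_k)$ are $O(\delta)$ and the mean-product term is $O(\delta w_{ik}w_{jk})$, so $|\Cov(\Delta_iD_k,Y_jD_k)|=O(\delta/(p(1-p)))$; summing against $\sum_{i,j}|\mathcal{M}_i\cap\mathcal{M}_j|\le nd_{\mathcal{G}}^2$ from (\ref{eq: M_i intersect M_j bound}) yields exactly the target $O(\delta d_{\mathcal{G}}^2/(np(1-p)))$. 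For the off-diagonal part $k\ne l$ I would reprove Proposition \ref{prop: Cov(Y_iD_k,Y_jD_l) bound} line by line, noting that every weight carrying $i$'s sensitivity now carries the $\Delta$-scaling; feeding the resulting bound into the five summations of the proof of Theorem \ref{theorem: variance upper bound} and observing that the $\Delta$-weights always collapse through a row sum over $i$'s directions (which is $O(\delta)$ rather than $O(1)$) shows each off-diagonal term is $O(\delta d_{\mathcal{G}}^2/n)$. The quadratic sum is treated identically but with two factors of $\delta$, giving $O(\delta^2 d_{\mathcal{G}}^2/(np(1-p)))$, which is absorbed.

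I expect the main obstacle to lie in the bookkeeping of the off-diagonal re-derivation of Proposition \ref{prop: Cov(Y_iD_k,Y_jD_l) bound}: one must verify that the extra factor of $\delta$ enters exactly once and always through a row sum over $i$'s directions rather than through a column sum, since a column sum of the missing-edge weights is \emph{not} controlled by $\delta$. Keeping the $\delta$ attached to the index carrying $\Delta_i$ throughout all five summations, and checking that the association and second-order steps do not silently generate a $\delta^0$ or $\delta^2$ term, is the delicate accounting that makes the bound land at the claimed order.
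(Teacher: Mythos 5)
Your route is essentially the paper's own, up to algebraic rearrangement. The paper decomposes $\Var(\hat\tau(\mathcal{G}))$ as $\sum[\Cov(\tilde T_{ik},\tilde T_{jl})+\Cov(\tilde T_{ik},T_{jl}-\tilde T_{jl})+\Cov(T_{ik}-\tilde T_{ik},T_{jl})]$, which is your mixed-plus-quadratic split in different bookkeeping; its engine is the same triple of $\delta$-shrunk smoothness bounds ($|\tilde f_i|\le\delta C_0$, $|\tilde\psi_i^k|\le\delta C_0 w_{ik}$, and a second-order analogue), followed by re-running Propositions \ref{prop: Cov(Y_iD_k,Y_jD_k) bound} and \ref{prop: Cov(Y_iD_k,Y_jD_l) bound} and the summations of Theorem \ref{theorem: variance upper bound} with the shrunk constants — exactly your plan. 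Two remarks on the differences. First, the paper defines $\tilde T_{ik}=E(f_i(\vec z)\mid\vec z_{\mathcal{M}_i})D_k$ rather than pinning the missing coordinates at a reference value; this is immaterial for the lemma (the conditional expectation is an average of pinned versions, so all your bounds transfer), though the paper's choice has the additional property $E(\tilde T_{ik})=E(T_{ik})$ by the tower rule, which is used downstream in Theorem \ref{theorem: var estimator bias} and which your version only satisfies up to $O(\delta w_{ik})$. Second, your explicit treatment of missing directions — $O(w_{ik})$ sensitivity with no $\delta$, rescued in aggregate by $\sum_{k\notin\mathcal{M}_i}w_{ik}\le\delta C_0$ from Assumption \ref{ass: gap between A, G}, and the insistence that $\delta$ must enter through row sums attached to the $\Delta$-carrying index — is more careful than the paper, which asserts the shrunk bounds uniformly without distinguishing the two cases.

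The one genuine gap is that your smoothness engine stops at first-order sensitivities of $\Delta_i$, derived from the second-order bound in Assumption \ref{ass: potential outcomes}, and never invokes Assumption \ref{ass: Lipchitz second derivative Y_i}. The line-by-line re-run of Proposition \ref{prop: Cov(Y_iD_k,Y_jD_l) bound} also consumes the \emph{mixed second differences} of the first factor: the step replacing $\psi^k(\cdot,z_l=1)$ by $\psi^k(\cdot,z_l=0)$, the analogous step in Step 2 of that proof, and the association step all require a bound of the form $|\phi_{\Delta_i}^{kl}|\le\delta C_0 w_{ik}w_{il}$. For $k$ and $l$ both in $\mathcal{M}_i$, Assumption \ref{ass: potential outcomes} alone gives only $|\phi_{\Delta_i}^{kl}|\le 2C_0w_{ik}w_{il}$ with no $\delta$ (and the crude alternative $|\phi_{\Delta_i}^{kl}|\le 4\|\Delta_i\|_\infty=O(\delta)$ destroys the product structure needed for summability); feeding the $\delta$-free bound into the Theorem \ref{theorem: variance upper bound} summations leaves a contribution of order $d_{\mathcal{G}}/n$ to the error, which exceeds the target $O(\delta d_{\mathcal{G}}^2/(np(1-p)))$ whenever $\delta<p(1-p)/d_{\mathcal{G}}$. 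The fix is to telescope the third-order condition of Assumption \ref{ass: Lipchitz second derivative Y_i} over the missing coordinates, giving $|\phi_{\Delta_i}^{kl}|\le C_0w_{ik}w_{il}\sum_{m\in\mathcal{N}_i\backslash\mathcal{M}_i}w_{im}\le\delta C_0^2 w_{ik}w_{il}$ — this is exactly the step the paper performs to bound its $\tilde\phi_i^{kl}$, and it is precisely the "silently generated $\delta^0$ term" you flagged as a risk but did not actually rule out.
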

\begin{proof} Let $\tilde T_{ik}=E(f_i(\vec z)D_{k}|\vec z_{\mathcal{M}_i})=E(f_i(\vec z)|\vec z_{\mathcal{M}_i})D_{k}$. 

 \[\Var(\hat\tau(\mathcal{G}))=\frac{1}{n^2}\sum_{i=1}^n\sum_{j=1}^n\sum_{k\in\mathcal{M}_i}\sum_{l\in\mathcal{M}_j}[\Cov(\tilde T_{ik},\tilde T_{jl})+\Cov(\tilde T_{ik},T_{jl}-\tilde T_{jl})+\Cov(T_{ik}-\tilde T_{il},T_{jl})]\]
 
Let $\tilde f_i(\vec z)=f_i(\vec z)-E(f_i(\vec z)|\vec z_{\mathcal{M}_i})$ and $\tilde\psi_i^k(\vec z_{-\{k\}})=\tilde f_i(\vec z_{-\{k\}},z_k=1)-\tilde f_i(\vec z_{-\{k\}},z_k=0)=\psi_i^k(\vec z_{-\{k\}})-E(\psi_i^k(\vec z_{-\{k\}})|\vec z_{\mathcal{M}_i\backslash \{k\}})$. By Assumption \ref{ass: bounded outcomes} and \ref{ass: potential outcomes},
\begin{align*}
&f_i(\vec z_{\mathcal{M}_i},\vec z_{-\mathcal{M}_i})- f_i(\vec z_{\mathcal{M}_i},\vec h_{-\mathcal{M}_i})\\
 \le& C_0\sum_{j\in \mathcal{M}_i} w_{ij}=C_0\sum_{j=1}^n w_{ij}\max\{A_{ij}-G_{ij},0\}\le \delta C_0\\
 &\forall i,\;\vec z_{\mathcal{M}_i},\vec h_{-\mathcal{M}_i}\in \{0,1\}^{n-|\mathcal{M}_i|}
\end{align*}
Analogously, 
\begin{align*}
 &\psi_i^k(\vec z_{\mathcal{M}_i\backslash \{k\}},\vec z_{-\mathcal{M}_i\cup \{k\} })- \psi_i^k(\vec z_{\mathcal{M}_i\backslash \{k\}},\vec h_{-\mathcal{M}_i\cup \{k\}})\\
 \le& C_0w_{ik}\sum_{j\in \mathcal{M}_i} w_{ij}=C_0w_{ik}\sum_{j=1}^n w_{ij}\max\{A_{ij}-G_{ij},0\}\le \delta C_0w_{ik}, \\
 &\forall i,k,\;\vec z_{\mathcal{M}_i\backslash  \{k\}},\vec h_{-\mathcal{M}_i\cup \{k\}}\in \{0,1\}^{n-|\mathcal{M}_i\cup\{k\}|}
\end{align*}
Finally, by Assumption \ref{ass: Lipchitz second derivative Y_i},
\begin{align*}
 &\phi_i^{kl}(\vec z_{\mathcal{M}_i\backslash \{k,l\}},\vec z_{-\mathcal{M}_i\cup \{k,l\} })- \phi_i^{kl}(\vec z_{\mathcal{M}_i\backslash \{k,l\}},\vec h_{-\mathcal{M}_i\cup \{k,l\}})\\
 \le& C_0w_{ik}w_{il}\sum_{j\in \mathcal{M}_i} w_{ij}=C_0w_{ik}w_{il}\sum_{j=1}^n w_{ij}\max\{A_{ij}-G_{ij},0\}\le \delta C_0w_{ik}w_{il}, \\
 &\forall i,k\ne l,\;\vec z_{\mathcal{M}_i\backslash \{k,l\}},\vec h_{-\mathcal{M}_i\cup \{k,l\}}\in \{0,1\}^{n-|\mathcal{M}_i\cup\{k,l\}|}
\end{align*}

Then, for all $i$ and $k\ne l$ we have
\begin{align*}
|\tilde f_i(\vec z)|=&|f_i(\vec z)-E(f_i(\vec z)|\vec z_{\mathcal{M}_i})|\le \delta C_0\\
 |\tilde\psi_i^k(\vec z_{-\{k\}})|=&|f_i(\vec z_{-\{k\}},z_k=1)-E(f_i(\vec z)|\vec z_{\mathcal{M}_i\backslash \{k\}},z_k=1)-f_i(\vec z_{-\{k\}},z_k=0)+E(f_i(\vec z)|\vec z_{\mathcal{M}_i\backslash \{k\}},z_k=0)| \\
 = &|\psi_i^k(\vec z_{-\{k\}})-E(\psi_i^k(\vec z_{-\{k\}})|\vec z_{\mathcal{M}_i\backslash \{k\}})|\\
 \le & \delta C_0w_{ik} \\
  |\tilde\psi_i^k(\vec z_{-\{k,l\}},z_l=1)&-\tilde\psi_i^k(\vec z_{-\{k,l\}},z_l=0)|
  =|\psi_i^k(\vec z_{-\{k,l\}},z_l=1)-E(\psi_i^k(\vec z_{-\{k,l\}},z_l=1)|\vec z_{\mathcal{M}_i\backslash \{k,l\}})\\
  &-\psi_i^k(\vec z_{-\{k,l\}},z_l=0)+E(\psi_i^k(\vec z_{-\{k,l\}},z_l=0)|\vec z_{\mathcal{M}_i\backslash \{k,l\}})| \\
 = &|\phi_i^{k,l}(\vec z_{-\{k,l\}})-E(\phi_i^{k,l}(\vec z_{-\{k,l\}})|\vec z_{\mathcal{M}_i\backslash \{k,l\}})|\\
 \le & \delta C_0w_{ik} w_{il}
\end{align*}

Now, consider $\Cov(\tilde T_{ik},T_{jl}-\tilde T_{jl})=\Cov(g_i(\vec z) D_{ik},\tilde f_j(\vec z)D_{jl})$, where $g_i(\vec z)=E(f_i(\vec z)D_{k}|\vec z_{\mathcal{M}_i})$. Obviously, $g_i$ satisfies Assumption \ref{ass: potential outcomes}. We replace the $f_i$, $f_j$, $\psi_j^k$ and $\psi_j^l$ in Proposition \ref{prop: Cov(Y_iD_k,Y_jD_l) bound} by $g_i$, $\tilde f_j$, $\tilde\psi_j^k$ and $\tilde\psi_j^l$, respectively. Applying the bounds derived above to $|\tilde f_j(\vec z)|$, $|\tilde\psi_j^k(\vec z_{-\{k\}})|$ and $|\tilde\psi_j^l(\vec z_{-\{l\}})|$ and following the procedure in Proposition \ref{prop: Cov(Y_iD_k,Y_jD_l) bound}, we can get exactly the same bound for $|\Cov(\tilde T_{ik},T_{jl}-\tilde T_{jl})|$ except the constant $C_0$ shrinks to $\delta C_0$. Similarly, the bound in Propostion \ref{prop: Cov(Y_iD_k,Y_jD_k) bound} also shrinks by $\delta$. Then following the steps in Appendix \ref{appendix: proof of variance upperbound}, we get
\begin{align*}
\frac{1}{n^2}\sum_{i=1}^n\sum_{j=1}^n\sum_{k\in\mathcal{M}_i}\sum_{l\in\mathcal{M}_j}|\Cov(\tilde T_{ik},T_{jl}-\tilde T_{jl})|=O\left(\frac{\delta d_{\mathcal{G}}^2}{np(1-p)}\right)
\end{align*}
Similar technique can be apply to $|\Cov(T_{ik}-\tilde T_{il},T_{jl})|$ and the result follows.
\end{proof}

\begin{lemma}[\citealp{newman1984-associated-rv}]
\label{lemma: associated r.v.}
For a pair of measurable numeric functions $f$ and $g$ defined on $A\in R^k$, we write $f\ll g$ if both functions $g+f$ and $g-f$ are nondecreasing with respect to each argument. Now let $X$ be any associated random vector with range in $A$. Then
\begin{align*}
    (f_i\ll g_i \text{ for } i=1,2)\Rightarrow  (|\Cov(f_1(X),f_2(X))|\le \Cov(g_1(X),g_2(X)))
\end{align*}
\end{lemma}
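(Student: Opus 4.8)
The plan is to invoke only the defining property of an associated random vector together with the bilinearity of covariance. Recall that $X$ being associated means precisely that $\Cov(\phi(X),\psi(X))\ge 0$ for every pair of coordinatewise nondecreasing functions $\phi,\psi$ (for which the covariance is defined). The hypothesis $f_i\ll g_i$ tells me that each of the four functions $g_1+f_1$, $g_1-f_1$, $g_2+f_2$, $g_2-f_2$ is nondecreasing in every argument. So the strategy is to apply the association inequality to all four sign combinations of these functions and then recombine.

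Concretely, I would abbreviate $a=\Cov(g_1(X),g_2(X))$, $b=\Cov(g_1(X),f_2(X))$, $c=\Cov(f_1(X),g_2(X))$, and $d=\Cov(f_1(X),f_2(X))$. Since each of $g_i\pm f_i$ is nondecreasing, association yields the four inequalities
\begin{align*}
\Cov(g_1+f_1,\,g_2+f_2)&=a+b+c+d\ge 0,\\
\Cov(g_1-f_1,\,g_2-f_2)&=a-b-c+d\ge 0,\\
\Cov(g_1+f_1,\,g_2-f_2)&=a-b+c-d\ge 0,\\
\Cov(g_1-f_1,\,g_2+f_2)&=a+b-c-d\ge 0,
\end{align*}
where the expansions use only bilinearity and symmetry of covariance (I suppress the argument $X$ for brevity). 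Adding the first two gives $2a+2d\ge 0$, hence $d\ge -a$; adding the last two gives $2a-2d\ge 0$, hence $d\le a$. Together these yield $-a\le d\le a$, i.e.\ $|\Cov(f_1(X),f_2(X))|\le \Cov(g_1(X),g_2(X))$, which is exactly the claim.

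The step that needs the most care is verifying the hypotheses under which association applies: I must confirm that each $g_i\pm f_i$ is genuinely coordinatewise nondecreasing (this is immediate from the definition $f_i\ll g_i$) and that all the covariances in question are well defined, which requires the relevant functions to be square-integrable with respect to the law of $X$. In the intended application these functions are bounded (they are differences of bounded potential outcomes, by Assumptions \ref{ass: bounded outcomes} and \ref{ass: potential outcomes}), so integrability is not an issue; I would simply note this. Since the result is the classical inequality of \citet{newman1984-associated-rv}, I expect no genuine obstacle beyond bookkeeping the four sign patterns correctly, and the proof reduces to the elementary combination above.
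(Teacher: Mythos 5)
Your proof is correct, but note that the paper itself contains no proof of this lemma: it is stated as a cited result of \citet{newman1984-associated-rv} and invoked as a black box in the proof of Proposition \ref{prop: Cov(Y_iD_k,Y_jD_l) bound}, where the paper verifies $\psi_i^k(z_l=0)\ll C_0 w_{ik}\lambda_i^{kl}$ and $\psi_j^l(z_k=0)\ll C_0 w_{jl}\lambda_j^{kl}$ and then applies the lemma. So there is no in-paper argument to compare against; what you have written is the standard derivation (essentially Newman's own), and it checks out. The four expansions of $\Cov(g_1\pm f_1,\,g_2\pm f_2)$ are correct by bilinearity; adding the first pair gives $a+d\ge 0$ and adding the second pair gives $a-d\ge 0$, hence $|d|\le a$ as claimed (summing all four also shows $a\ge 0$, as it must be). You correctly identify the only two hypotheses needed to invoke association: coordinatewise monotonicity of each $g_i\pm f_i$, which is the definition of $\ll$, and existence of the covariances, which is automatic in the paper's application since all the relevant functions are bounded via Assumptions \ref{ass: bounded outcomes} and \ref{ass: potential outcomes}. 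One small caveat worth keeping in mind: if one adopts the version of the definition of association that quantifies only over \emph{bounded} nondecreasing functions, then extending to merely square-integrable $f_i,g_i$ requires a routine truncation argument; since every use in this paper involves bounded functions, that refinement is unnecessary here. The net effect of your write-up is to make the paper self-contained on this point, clarifying that nothing beyond the defining property of an associated random vector is used.
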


\end{document}